\begin{document}

\newtheorem{lemma}{Lemma}
\newtheorem{theorem}{Theorem}
\newtheorem{proposition}{Proposition}
\newtheorem{corollary}{Corollary}
\newtheorem{hypothesis}{Hypothesis}
\newtheorem{remark}{Remark}
\newtheorem{definition}{Definition}

\newcommand{\bee}{\begin{eqnarray}}
\newcommand{\eee}{\end{eqnarray}}
\newcommand{\be}{\begin{eqnarray*}}
\newcommand{\ee}{\end{eqnarray*}}
\newcommand{\R}{{\mathbb R}}
\newcommand{\N}{{\mathbb N}}
\newcommand{\Z}{{\mathbb Z}}
\newcommand{\C}{{\mathbb C}}

\newcommand{\Rp}{{\Bbb R}}
\newcommand{\Np}{{\Bbb N}}
\newcommand{\Zp}{{\Bbb Z}}
\newcommand{\Cp}{{\Bbb C}}
\newcommand{\D}{\mbox {\sc D}}
\newcommand{\I}{\mbox {\sc 1}}
\newcommand{\0}{\mbox {\sc 0}}
\newcommand{\K}{{\it K}}
\newcommand{\II}{{\it I}}
\newcommand{\W}{{\it W}}
\newcommand{\F}{{\it F}}
\newcommand{\U}{{\it U}}
\newcommand{\V}{{\it V}}
\newcommand{\f}{\mbox {\sf f}}
\newcommand{\g}{\mbox {\sf g}}
\newcommand{\h}{\mbox {\sf h}}
\newcommand{\s}{{\it S}}

\newcommand{\ind}{\hskip 0.5cm}
\newcommand{\case}[2]{\textstyle{\frac{#1}{#2}}}

\newcommand{\E}{{\mathcal E}}
\newcommand{\asy}{\tilde {\mathcal O}}

\title []{Nonlinear Schr\"odinger equations with a multiple-well potential and a Stark-type perturbation}
%{Accelerated Bose-Einstein condensates in an optical lattice}

\author {Andrea SACCHETTI}

\address {Department of Physics, Computer Sciences and Mathematics, University of Modena e Reggio Emilia, Modena, Italy\\Via G. Campi 213/A, Modena - 41125 - Italy}

\email {andrea.sacchetti@unimore.it}

\date {\today}

\begin {abstract} {A Bose-Einstein condensate (BEC) confined in a one-dimensional lattice under the effect of an external homogeneous field is described by the Gross-Pitaevskii 
equation. \ Here we prove that such an equation can be reduced, in the semiclassical limit and in the case of a lattice with a finite number of wells, 
to a finite-dimensional discrete nonlinear Schr\"odinger equation. \ Then, by means of numerical experiments we show that the BEC's center of mass exhibits an oscillating 
behavior with modulated amplitude; in particular, we show that the oscillating period actually depends on the shape of the initial wavefunction of the condensate as well as on the strength of 
the nonlinear term. \ This fact opens a question concerning the validity of a method proposed for the determination of the gravitational constant by means of 
the measurement of the oscillating period.}

\bigskip

%{\it PACs numbers:} 02.70.Hm Spectral methods; 03.75.Kk Dynamic properties of condensates, collective and hydrodynamic excitations, superfluid flow; 37.10.Jk 
%Atoms in optical lattices. 
%
%\bigskip
%
%{\it Ams2010:} 35Q55 NLS-like equations (nonlinear Schr\"odinger); 65J15 Equations with nonlinear operators; 70Kxx Nonlinear dynamics; 81Q20 Semiclassical 
%techniques, including WKB and Maslov methods.
%
%{\it Keywords:} Bose-Einstein condensates in multiple-well potentials;  Gross-Pitaevskii equation; Discrete nonlinear Schr\"odinger equation; Bloch Oscillation.

\end{abstract}

\maketitle

%{\bf Highlights:}

\normalsize

%\begin {itemize}
% \item [$\bullet$] We model an accelerated ultracold condensate of atoms in an optical lattice by means of a discrete nonlinear Sch\"odinger equation;
% 
% \item [$\bullet$] We numerically compute the wavefunction dynamics of the condensate for times of the order of the Bloch oscillation period;
% 
% \item [$\bullet$] We find out that the oscillating period of the center of mass of the condensate actually depends on the nonlinearity strength as well as 
% on the initial shape of the wavefunction, in contrast with the Bloch theorem.
% 
%\end {itemize}

\section {Introduction}

Laser-cooled atoms have drawn a lot of attention as for potential applications to interferometry and high-precision 
measurements, from the determination of gravitational constants to geophysical 
applications \cite {FFMK,LBCPT,MFFSK,RSCPT}, see also \cite {CladeReview,TinoReview} for a recent review. \ The idea of using ultracold atoms moving 
in an accelerated optical lattice \cite {Bloch1,Bloch2,RSN,SPSSKP,Shin} has opened the field to multiple applications. \ In particular, 
by means of the method proposed by Clad\'e {\it et al} \cite {CGSNJB}, a value for the constant $g$ has been measured using 
ultracold strontium atoms confined in a vertical optical lattice \cite {FPST}; such a result has been improved 
 by using a larger number of atoms and reducing the initial temperature of the sample \cite  {PWTAPT}. \ Determination of $g$ 
has been obtained by measuring the period $T$ of the Bloch oscillations of the atoms in the vertical optical 
lattice; recalling that 
\bee
T = \frac {2 \pi \hbar}{m g b} \, , \label {Eq0}
\eee
where $m$ is the mass of the Strontium atom, $\hbar$ is the Planck constant and $b$ 
is the lattice period, then a precise value of the constant $g$ has been obtained by means of the experimental measurements of the oscillating period. \ Since Bloch 
oscillations with period (\ref {Eq0}) have been predicted by the Bloch Theorem \cite {Callaway} only for a one-body particle in a periodic field and under the effect 
of a Stark potential then it has been chosen, in the experiments above, a particular Strontium's isotope ${}^{88}Sr$; in fact, the 
scattering length $a_s$ of atoms ${}^{88}Sr$ is very small and thus it has been assumed by \cite {FPST,PWTAPT} that the effects of the atomic 
binary interactions are negligible. \ The obtained value for the constant $g$ was consistent with the one obtained by classical gravimeters; but it was 
affected by a relative uncertainty of order $6 \times 10^{-6}$ because of a larger scattering in repeated measurements, mainly due to 
the initial position instability of the trap. \ Such a technique is also proposed to 
measure surface forces \cite {SAFIPST}, too.

The critical point of this experimental procedure concerns the validity of the Bloch Theorem and the estimate of the effect of the atomic binary interactions on the 
oscillating period of the BEC. \ In order 
to discuss this point here we are inspired by a realistic model of a one-dimensional cloud of cold atoms in a periodical optical 
lattice under the effect of the gravitational force. \ The periodic potential has the shape
\bee
V_{per} (x) = V_0 \sin^2 (k_L x) \label {Eq1}
\eee
where $b=\frac 12 \lambda_L$ is the period, and $\lambda_L =\frac {2\pi }{k_L}$. \ The one-dimensional BEC is governed by the one-dimensional time-dependent 
Gross-Pitaevskii equation with a periodic potential and a Stark potential
\bee
i \hbar {\partial_t \psi} = H_B \psi +  f x \psi + \gamma |\psi |^{2 } \psi \, , \ f = m g \, ,  \label {Eq2}
\eee
where the wavefunction $\psi (\cdot , t ) \in L^2 (R ,dx)$ is normalized to one: 
\be
\| \psi (\cdot , t ) \|_{L^2} = \| \psi_0 (\cdot )\|_{L^2} =1 \, , 
\ee
and where 
\be
H_B = - \frac {\hbar^2}{2m} \partial^2_{xx} + V_{per} (x) 
\ee
is the Bloch operator with periodic potential $V_{per}(x)$. \ By $\gamma $ we denote  the effective one-dimensional nonlinearity strength. 

It is a well known fact (see \S 6.1 by \cite {Callaway}) that when the wavefunction $\psi$ is prepared on the first band of the Bloch operator and if the nonlinear term is absent, i.e. $\gamma =0$, then the 
dominant term of the wavefunction $\psi$ exhibits a periodic behavior with Bloch period $T$ within an interval with amplitude 
$\frac {B_1}{|f|}$, where $B_1$ is the width of the first band and where $f \in \R$ is the strength of the external homogeneous field 
(in the case of $f = m g$ then $f$ takes only positive values, obviously). \ Therefore, for times of the order of the Bloch period $T$ we may assume that the 
motion of the BEC occurs in a finite interval. \ Hence, 
we can restrict ourselves to the analysis of equation (\ref {Eq2}) in a suitable finite interval and then we may assume to consider a 
multiple-well potential $V_N (x)$ (with a fixed number $N$ of wells) and that the Stark potential $x$ is replaced by a Stark-type 
potential $W_N (x)$ due to an homogeneous external field which acts only in a bounded region containing the $N$ wells (see Fig. \ref {Figura1}). \ That is, 
instead of (\ref {Eq2})  we consider, as a model for a BEC in an optical lattice under an external homogeneous field,  
the time-dependent non-linear Schr\"odinger equation (NLS)
\bee
\left \{ 
\begin {array}{l}
i \epsilon \partial_t \psi = H_N \psi  + f W_N(x) \psi + \gamma |\psi |^2 \psi \, , \ H_N = - \epsilon^2 \partial^2_{xx} + V_N   \\ 
\psi (x,0) = \psi_0 (x) 
\end {array}
\right. \label {Eq3}
\eee
where $\epsilon >0$ plays the role of the semiclassical parameter (we prefer to denote here the small semiclassical parameter by $\epsilon$ 
instead of the usual notation $\hbar$ because in a subsequent section we'll discuss a real physical model where $\hbar$ will assume 
its \emph {fixed} physical value; with such a 
notation it turns out that the Bloch period is given by $T= \frac {2\pi \epsilon}{|f|b}$). \ We assume that 
the $N$ wells have all the same shape and we denote by $b>0$ the distance between the adjacent absolute minima points. 
\begin{figure} [h]
\begin{center}
\includegraphics[height=6cm,width=8cm]{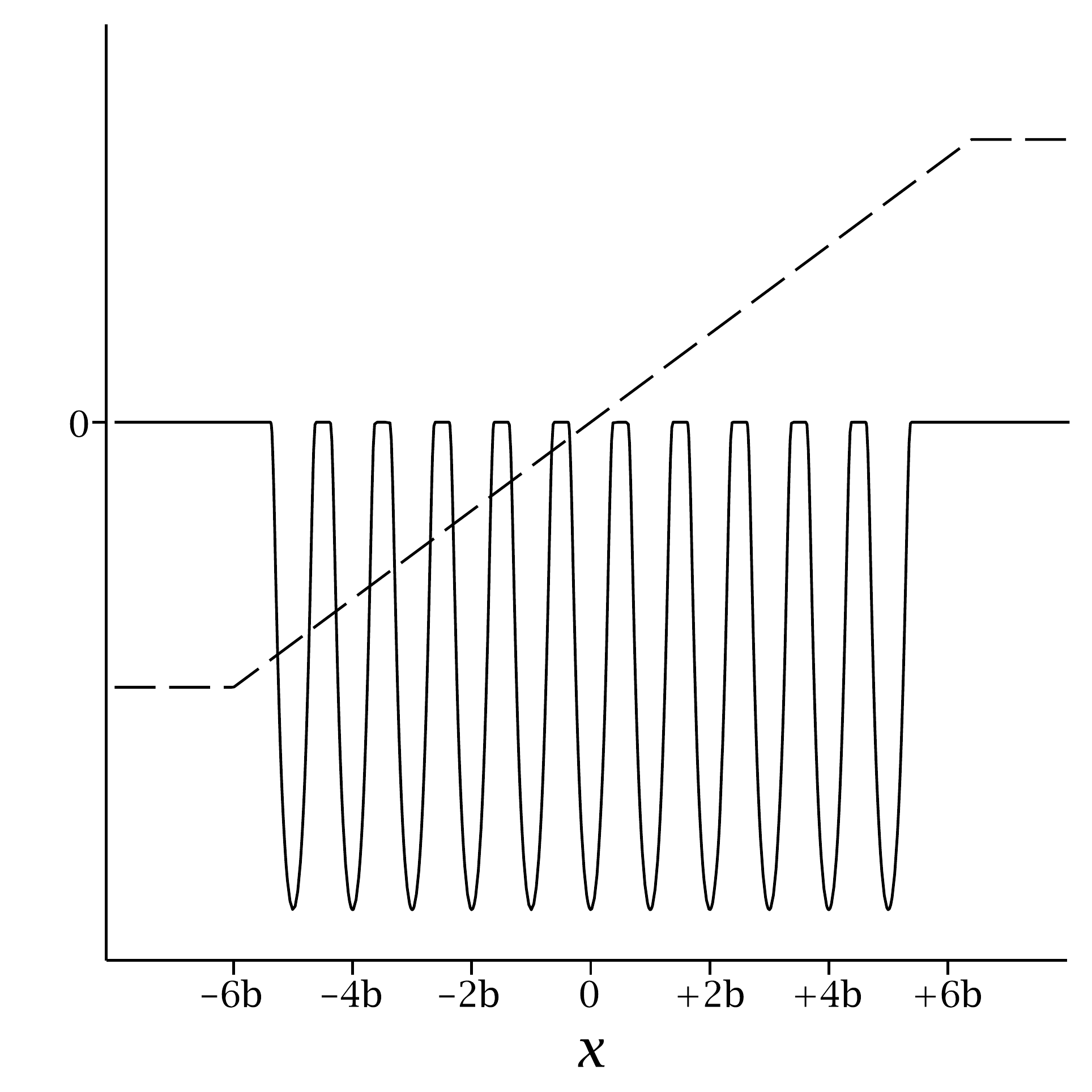}
\caption{\label {Figura1} Plot of the multiple-wells potential $V_N$ (full line) and of the Stark-type potential $W_N$ (broken line), where $N=11$. \ By $b>0$ 
we denote the distance between the adjacent absolute minima points.}
\end{center}
\end{figure}

The study of the dynamics of the wavefunction $\psi$, solution of (\ref {Eq3}), is then achieved by means of a discrete nonlinear Schr\"odinger 
equation (DNLS). \ The idea is basically simple and it consists in assuming that the wavefunction $\psi$ may be written as a superposition 
of vectors $u_\ell (x)$ localized on the $\ell -$th cell of the lattice; that is 
\be
\psi (x,t) \sim \sum_{\ell = 1}^N c_\ell (t) u_\ell (x) \, . 
\ee
Such an approach has been successfully used in the cases of semiclassical NLS with multiple-well potentials \cite {S1} or with periodic 
potentials 
(see \cite {FS,P1,P2}), without the external field with potential $W_N$. \ Eventually, $u_\ell (x)$ may coincide with the Wannier function $u_\ell^W (x)$ associated to the first band of the Bloch 
operator $H_B$ or with the semiclassical single well ground state eigenfunction $u_\ell^{sc} (x)$. \ By means of such an approach the unknown functions $c_\ell (t)$ turn out to be the solutions of a system of time-dependent equations  
which dominant terms are given by (here we denote $\dot {} = \frac {d}{dt}$)
\bee
i \epsilon \dot c_\ell = - \lambda_D c_\ell - \beta \left ( c_{\ell +1} + c_{\ell -1} \right ) + \gamma \| u_0 \|^{4}_{L^4} |c_\ell|^{2} c_\ell + 
f b \ell c_\ell \, , \ \ell =1,\ldots , N \label {Eq4}
\eee
where $\lambda_D$ is the ground state of a single cell potential and where $\beta $ is the hopping 
matrix element between neighboring sites. \ In fact, the parameter $\beta$ is expected to be such that $4 \beta$ is equal to the 
amplitude $B_1$ of the first band \cite {S2}. \ In (\ref {Eq4}) we'll fix $c_0 \equiv c_{N+1} \equiv 0$. \ Equation (\ref {Eq4}) represents a discrete nonlinear Schr\"odinger 
equation (DNLS).

Our approach is both semiclassical and perturbative. \ It is semiclassical in the sense that it holds true in the semiclassical regime of $\epsilon$ small enough; and 
it is perturbative in the sense that the external field $f$ and the nonlinearity power strength $\gamma$ must be small when $\epsilon$ goes to zero 
(see Hyp. \ref {Ipo3} for details). \ Under these conditions we prove the validity of the $N$-mode approximation (\ref {Eq4}) with a rigorous estimate of the remainder term 
\emph {for times of the order of the Bloch period}. \ Then, we numerically solve the $N$-mode approximation (\ref {Eq4}), and we compute the oscillating period taking into account the nonlinear 
interaction. \ In fact, the behavior of the wavefunction is not simply periodic in time; it turns out that the center of mass $\langle x \rangle^t = 
\langle \psi , x \psi \rangle $ shows an oscillating motion with modulated amplitude. \ The oscillating period turns out to be depending on the 
nonlinearity parameter strength $\gamma$ and we see that it also depends on the distribution of the initial wavefunction $\psi_0$. \ In particular, when $\psi_0$ is a 
symmetric wavefunction then the oscillating period is almost constant for small $\gamma$ and it practically coincides with the Bloch period $T$; on the other hand 
when $\psi_0$ is an asymmetrical function the oscillating period actually depends on $\gamma$. \ This fact is in contradiction with the Bloch Theorem 
(which holds true when $\gamma =0$), which implies that the Bloch period $T$ does not depend on the shape of the initial wavefunction, and it  
may explain the relatively large uncertainty observed by \cite {PWTAPT} in their experiments, as discussed in the Conclusions.

The paper is organized as follows. \ In Section 2 we derive the DNLS (\ref {Eq4}) from the NLS (\ref {Eq3}) in the semiclassical limit $\epsilon \to 0$ for times of the 
order of the Bloch period $T$ with a rigorous estimate of the remainder term. \ In particular: in \S 2.1 we introduce the assumptions and we recall some preparatory 
results; in \S 2.2 we derive the DNSL by making use of some ideas previously given by \cite {S1} and adapted to the case of multiple-well potential with an external Stark-type 
perturbation. \ In Section 3 we consider a realistic experiment and we compute the wavefunction dynamics by making use of the DNLS. \ In particular: in \S 3.1 
we discuss the validity of the $N$-mode approximation for different values of the parameters; in \S 3.2 we numerically compute the wavefunction for times of the order 
of the 
Bloch period. \ In Appendix we write the Wannier functions in terms of the Mathieu functions.

\subsection* {Notation} Let $g$ be a quantity depending on the semiclassical parameter $\epsilon$. \ In the following 
\be
g = \asy \left ( e^{-S_0/\epsilon } \right ) 
\ee
means that for any $\epsilon^\star >0$ and any $\rho \in (0, S_0 )$ there exists $C:=C_{\rho , \epsilon^\star}$ such that 
\be
|g| \le C e^{-(S_0 - \rho ) /\epsilon } \, , \ \forall \epsilon \in (0,\epsilon^\star) \, . 
\ee

Hereafter, by $C$ we denote a generic positive constant independent of $\epsilon$.

Let $N\in \N$, then by $\N_N := \{ 1,2,\ldots , N \}$ we denote the set of first $N$ positive integer numbers.

By $\| \cdot \|_{L^p}$ we denote the norm of the Banach space $L^p (\R )$, by $\langle \cdot , \cdot \rangle $ we denote the scalar product of the Hilbert space $L^2 (\R )$.

\section {Derivation of the DNLS (\ref {Eq4})}

\subsection {Assumptions and preliminary results}

We consider the time-dependent non-linear Schr\"odinger equation (\ref {Eq3}) where $V_N$ is a multiple-well potential and $W_N (x)$ is a bounded Stark-type potential. \ In 
particular we assume that

\begin {hypothesis} \label {Ipo1} {Let $v(x) \in C_0^\infty (\R)$ be an even (i.e. $v(-x)=v(x)$) smooth function with compact support with a non degenerate minimum value at $x=0$:
\be
v(x) > v_{min} = v(0) , \ \forall x \in \R \, ,\ x \not= 0 .
\ee
The multiple-well potential is defined as 
\be
V_N(x) = \sum_{\ell =1}^N v (x-x_\ell ) 
\ee
for some fixed $N>1$, where $x_\ell = \left ( \ell - \frac {N+1}{2} \right ) b $ and where $b>0$ is such 
that $\mbox {supp} \ v \subset \left ( - \frac b2 ,+ \frac b2 \right )$.}
\end {hypothesis}

Hence, by construction the potential $V_N(x)$ has exactly $N$ wells with not degenerate minima at $x=x_\ell $, $\ell \in \N_N$.

\begin {remark} \label {Remark1}
We assume that $v(x)$ is an even function just for argument's sake. \ As discussed in Remark \ref {Remark6} this assumption may be removed. \ Furthermore, we assume 
that $v$ is a smooth function as usual; in fact, a lessere regularity (e.g. $C^2$) would be enough.
\end {remark}

\begin {hypothesis} \label {Ipo2} {Let $W_N(x) \in C (\R)$ be the monotone not decreasing function defined as 
\be
W_N(x) = 
\left \{
\begin {array}{ll}
-L& \mbox { if } x < -L \\ 
 x & \mbox { if } x \in [-L , L ] \\
L & \mbox { if } x > L
\end {array}
\right.
\ee
for some $L > \frac {N+1}{2} b$.}
\end {hypothesis}

That is the Stark-type potential $W_N$ is linear in the region containing the wells and it is a constant function outside this region (see Fig. \ref {Figura1}). \ In 
the ``limit'' where $N$ goes to infinity the potential $V_N$ becomes a periodic potential with period $b$ and the external potential $W_N$ 
becomes the Stark potential $x$.

\begin {remark} \label {Remark2}
We restrict ourselves to a multiple-well potential $V_N$ with a finite number of wells only for sake of simplicity; one could consider the case of a periodic potential by 
making use of the tools developed by \cite {FS}. \ On the other side, the assumption on $W_N$ is not merely for the sake of simplicity; actually, the Stark-type potential 
$W_N$ is a bounded operator while the Stark potential $x$ is not a bounded operator and this fact is a source of several 
technical problems. \ In fact, in real experiments the BEC are trapped in a finite spatial region. 
\end {remark}

\begin {hypothesis} \label {Ipo3} {We assume to be in the {\bf semiclassical limit}, that is we look for the solution of (\ref {Eq3}) in the limit of $\epsilon $ that 
goes to zero. \ We assume also that the other two parameters 
$\gamma$ and $f$ are small for $\epsilon$ small. \ That is we assume that there exists $\epsilon^\star >0$ such that 
\be
C e^{- (S_0 - \rho )/\epsilon } \le |f| \le C \epsilon^s \, , \ \forall \epsilon \in (0, \epsilon^\star ) \, , 
\ee
for some $s > 2$, $C>0$ and $\rho \in (0, S_0)$ independent of $\epsilon$; furthermore, we assume also that 
\bee
\frac {|\gamma | \epsilon^{-1/2}}{|f|} \le C
\eee
for some positive constant $C$ and for any $\epsilon \in (0,\epsilon^\star )$.}
\end {hypothesis}

The self-adjoint extension of the linear Schr\"odinger operator formally defined on $L^2 (\R )$ as 
\be
H_N = - \epsilon^2 \partial^2_{xx} + V_N 
\ee
has an almost degenerate ground state with dimension $N$. \ More precisely, let $\lambda_\ell$, $\ell \in \N_N$, be the lowest eigenvalues of $H_N$ with 
associated normalized eigenvectors $v_\ell$. \ In particular we have that (see Lemma 2 \cite {S2})
\be
\lambda_\ell = \lambda_D - 2 \beta \cos \left ( \ell \frac {\pi}{N+1} \right ) + O(\epsilon^\infty ) e^{-S_0/\epsilon } \, , \ 
\ell \in \N_N  \, , 
\ee
where 
\be
S_0 = \int_{x_0}^{x_1} \sqrt {V_N (x) - v_{min}} \, dx >0
\ee
is the Agmon distance between two wells and $\lambda_D$ is the ground state of the single well operator $-\epsilon^2 \partial_{xx}^2 + v$, where 
the single well potential $v$ has been introduced by Hyp. 1. \ The numerical pre-factor $\beta$ is the hopping matrix element between neighboring wells, and 
it is such that $4\beta $ is asymptotic to the 
amplitude of the first band of the periodic Bloch operator $H_B$; i.e. $4\beta \sim B_1 := E_1^t - E_1^b$ where $E_1^b$ 
and $E_1^t$ are, respectively, the bottom and the top of the first band. \ Such a numerical pre-factor is going to be exponentially small, i.e. 
\be
\beta = {\asy} (e^{-S_0/\epsilon }) \ \mbox { as } \ \epsilon \to 0^+ \, . 
\ee

\begin {remark} \label {Remark3}
Hyp. \ref {Ipo3} means that, from a practical point of view, the parameter $f$ cannot be arbitrarily small, but it has 
a lower bound of order $\beta$. \ On the other hand, the parameter $\gamma$ may be arbitrarily small.
\end {remark}

The associated normalized eigenvectors are given by \cite {S2}
\be
v_\ell = \sum_{j=1}^N \alpha_{\ell ,j} u_j^{sc} + O(\epsilon^\infty ) e^{-S_0/\epsilon }
\ee
where 
\be
\alpha_{j,\ell }= \alpha_{\ell ,j} = \sqrt {\frac {2}{N+1}} \sin \left ( j \ell \frac {\pi}{N+1} \right ) 
\ee
and where $u_j^{sc} (x)$ is the semiclassical single well ground state eigenfunction localized on the $j$-th cell; by construction and since $v(x)$ is an even function then 
\bee
u_j^{sc} (x) = u_0^{sc} (x- x_j) \ \mbox { and } \ u_0^{sc} (x) = u_0^{sc} (-x) \, . \label {Eq6}
\eee

Now, let $\Pi$ be the projection operator associated with the $N$ eigenvalues $\lambda_\ell $, i.e. 
\be
\Pi = \sum_{\ell =1}^N \langle v_\ell , \cdot \rangle v_\ell 
\ee
and let 
\be
\Pi_c = 1- \Pi \, . 
\ee
Let $F = \Pi (L^2 (\R ))$ be the $N$-dimensional space spanned by the $N$ eigenvectors $v_\ell $, $\ell \in \N_N$.

\begin {remark} \label {Remark4}
Let $\sigma (H_N)$ be the spectrum of $H_N$; then it is a well known semiclassical result that
\be
C^{-1} \epsilon \le 
\mbox {\rm dist} \left ( \left \{ \lambda_\ell \right \}_{\ell=1}^N , \sigma (H_N ) \setminus \left \{ \lambda_\ell \right \}_{\ell=1}^N \right ) \le C \epsilon 
\ee
for some positive constant $C>0$. \ Hence, since $H_N$ is a self-adjoint operator then 
\be
\left \| [H_N - \lambda_D ]^{-1} \Pi_c \right \|_{{\mathcal L} (L^2 \to L^2 )} \le C \epsilon^{-1}
\ee
for some $C>0$.
\end {remark}

\begin {remark} \label {Remark5}
By \cite {FS} it has been proved that there exists a suitable orthonormal base $u_\ell$, $\ell \in \N_N$, of the space 
$F$. \ The functions $u_\ell$ are practically  localized on the $\ell$-th well. \ More precisely, they are such that 

\begin {itemize}
 
\item [i.] $\| u_\ell - u_\ell^{sc} \|_{L^p} = \asy \left ( e^{-S_0/\epsilon } \right ) $ for any $p \in [2,+\infty ]$ and any $\ell \in \N_N $;

\item [ii.] $\|  u_\ell u_j \|_{L^1} = \asy \left ( e^{-S_0|j-\ell |/\epsilon } \right ) $ for any $j , \ell \in \N_N$;

\item [iii.] $ \| u_\ell \|_{L^p} \le C \epsilon^{-\frac {p-2}{4p} } $, $p\in [2,+\infty ]$, and $\| \partial_x u_\ell \|_{L^2} \le 
C \epsilon^{-1/2}$ for any $ \ell \in \N_N$;

\item [iv.] The matrix with elements $\langle u_\ell , H_N u_j \rangle $ can be written as
\be
\left ( \langle u_\ell , H_N u_j \rangle \right ) = \lambda_D \I_N - \beta {\mathcal T} + D_N 
\ee
where ${\mathcal T}$ is the tridiagonal Toeplix matrix such that 
\be
{\mathcal T}_{j,\ell } = 
\left \{
\begin {array}{ll}
0 & \mbox { if } \ |j-\ell |\not= 1 \\ 
1 & \mbox { if } \ |j-\ell | = 1 
\end {array}
\right.
\ee
and where the remainder term $D_N$ is a bounded linear operator from $\ell^p (\N_N) $ to $\ell^p (\N_N) $ with bound
\be
\| D_N \|_{{\mathcal L} (\ell^p (\N_N) \to \ell^p (\N_N) )} = \asy \left ( e^{- (S_0 + \alpha ) /\epsilon } \right )\, , \ p\in [1,+\infty ] \, ,
\ee
for some $\alpha >0$. 

\end {itemize}

\end {remark}

We finally assume that the initial state is prepared on the first $N$ ``ground states''. \ That is

\begin {hypothesis} \label {Ipo4}
{$\Pi_c \psi_0 =0$.}
\end {hypothesis}

It is well known that under the assumptions above the NLS (\ref {Eq3}) is locally well posed, and the conservation of the norm and of the energy \cite {CW,C}
\be
{\mathcal E} (\psi ) = \langle \psi , H_N \psi \rangle + \frac 12 \gamma \| \psi \|_{L^4}^4 + f \langle \psi , W_N \psi \rangle 
\ee
easily follow:
\be
\| \psi (\cdot , t) \|_{L^2} = \| \psi_0 (\cdot ) \|_{L^2} \ \mbox { and } \ 
{\mathcal E} \left ( \psi (\cdot ,t ) \right ) = {\mathcal E} \left ( \psi_0 (\cdot ) \right ) \, . 
\ee
Furthermore the following \emph {a priori} estimate follows, too.

\begin {lemma} \label {lemma1}
There exists a positive constant $C>0$ such that
\be
\| \psi \|_{H^1} \le C \epsilon^{-1/2} \ \mbox { and } \ \| \psi \|_{L^p}^p \le C \epsilon^{-\frac {p-2}{4}} \, , \forall p \in [2,+\infty ]\, . 
\ee
\end {lemma}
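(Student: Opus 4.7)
The plan is to derive the bounds directly from conservation of energy together with a careful estimate of the initial energy and, in the focusing case, a Gagliardo--Nirenberg absorption argument. Writing the conservation law as
\[
\epsilon^{2} \| \partial_{x} \psi \|_{L^{2}}^{2} + \langle \psi , V_{N} \psi \rangle + \tfrac{1}{2}\gamma \|\psi\|_{L^{4}}^{4} + f \langle \psi , W_{N} \psi \rangle = \mathcal{E}(\psi_{0}),
\]
the idea is to show that the right-hand side exceeds $v_{min}$ only by a quantity of order $\epsilon$, after which the lower bound $V_{N}(x)\ge v_{min}$ gives $\epsilon^{2}\|\partial_{x}\psi\|_{L^{2}}^{2} \le C\epsilon$, i.e.\ $\|\partial_{x}\psi\|_{L^{2}} \le C\epsilon^{-1/2}$.

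First I would control the initial energy. Since $\Pi_{c}\psi_{0}=0$, the vector $\psi_{0}$ is a unit linear combination of the eigenvectors $v_{\ell}$, so
\[
\langle \psi_{0}, H_{N}\psi_{0}\rangle \,=\, \sum_{\ell=1}^{N} \lambda_{\ell} |a_{\ell}|^{2} \,=\, \lambda_{D} + \asy(e^{-S_{0}/\epsilon})
\]
by the expansion of $\lambda_{\ell}$ quoted before Remark~3; a standard semiclassical expansion of $\lambda_{D}$ around the non-degenerate minimum of $v$ gives $\lambda_{D} - v_{min} = O(\epsilon)$. Writing $\psi_{0} = \sum a_{\ell} v_{\ell}$ and invoking item (i) of Remark~5 to replace $v_{\ell}$ by $u_{\ell}^{sc}$ up to exponentially small errors, item (iii) yields $\|\psi_{0}\|_{L^{4}}^{4} \le C\epsilon^{-1/2}$. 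Hypothesis~3 ($|\gamma|\epsilon^{-1/2} \le C|f| \le C\epsilon^{s}$, $s>2$) then gives $|\gamma|\|\psi_{0}\|_{L^{4}}^{4} \le C\epsilon^{s}$, and trivially $|f\langle \psi_{0},W_{N}\psi_{0}\rangle| \le |f|L \le C\epsilon^{s}$. Altogether
\[
\mathcal{E}(\psi_{0}) - v_{min} \,\le\, C\epsilon.
\]

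Next I would turn to $\psi(\cdot,t)$. The Stark-type term is harmless since $|W_{N}|\le L$, and the defocusing case $\gamma\ge 0$ is immediate because the nonlinear term on the left is non-negative. In the focusing case $\gamma<0$ I would handle $\frac{1}{2}|\gamma|\|\psi\|_{L^{4}}^{4}$ by the one-dimensional Gagliardo--Nirenberg inequality $\|\psi\|_{L^{4}}^{4} \le C \|\psi\|_{L^{2}}^{3}\|\partial_{x}\psi\|_{L^{2}}$, combined with Young to produce $\tfrac{1}{4}\epsilon^{2}\|\partial_{x}\psi\|_{L^{2}}^{2} + C|\gamma|^{2}/\epsilon^{2}$; under Hyp.~3 the second term is $O(\epsilon^{2s-1})$, hence negligible. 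The term $\tfrac{1}{4}\epsilon^{2}\|\partial_{x}\psi\|_{L^{2}}^{2}$ is absorbed into the kinetic term on the left. Combining with the bound $\mathcal{E}(\psi_{0})-v_{min}\le C\epsilon$ yields $\|\partial_{x}\psi\|_{L^{2}}\le C\epsilon^{-1/2}$, and since $\|\psi\|_{L^{2}}=1$ the $H^{1}$ bound follows.

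Finally, the $L^{p}$ bounds are a direct consequence of the $H^{1}$ bound. The one-dimensional Sobolev/Gagliardo--Nirenberg inequality gives $\|\psi\|_{L^{\infty}} \le C \|\psi\|_{L^{2}}^{1/2}\|\partial_{x}\psi\|_{L^{2}}^{1/2} \le C\epsilon^{-1/4}$, and interpolation
\[
\|\psi\|_{L^{p}}^{p} \,\le\, \|\psi\|_{L^{\infty}}^{p-2}\|\psi\|_{L^{2}}^{2} \,\le\, C\epsilon^{-(p-2)/4},
\]
covers all $p\in[2,+\infty]$. The chief technical obstacle is the focusing case: one must verify that Hyp.~3 really is strong enough that the Young-absorption loses nothing in the exponent of $\epsilon$; once this is pinned down, the remainder of the argument is routine, with the sharpness $\epsilon^{-1/2}$ (rather than the weaker $\epsilon^{-1}$) crucially coming from the fact that $\lambda_{D}-v_{min}$ is $O(\epsilon)$, not $O(1)$.
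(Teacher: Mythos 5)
Your proof is correct, and the key quantitative input is the same as the paper's: everything hinges on showing that $\mathcal{E}(\psi_0)-V_{min}=O(\epsilon)$ (rather than $O(1)$), which you obtain exactly as the paper implicitly does, from $\Pi_c\psi_0=0$, the clustering of $\lambda_1,\dots,\lambda_N$ around $\lambda_D$, the harmonic approximation $\lambda_D-v_{min}=O(\epsilon)$, and the smallness of $\gamma$ and $f$ from Hyp.~3. The difference is one of packaging: the paper simply invokes Theorem~2 of \cite{S1}, which already delivers $\|\nabla\psi\|_{L^2}\le C\sqrt{\Lambda}$ and $\|\psi\|_{L^p}\le C\Lambda^{(p-2)/(4p)}$ with $\Lambda=(\mathcal{E}(\psi_0)-V_{min})/\epsilon^2$, and then only checks $\Lambda\sim\epsilon^{-1}$; you instead re-derive the content of that cited theorem from scratch via energy conservation, the lower bound $V_N\ge v_{min}$, a Gagliardo--Nirenberg plus Young absorption of the focusing quartic term, and $L^2$--$L^\infty$ interpolation for the $L^p$ bounds. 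Your absorption step is sound: $|\gamma|\,\|\partial_x\psi\|_{L^2}\le\frac14\epsilon^2\|\partial_x\psi\|_{L^2}^2+C|\gamma|^2\epsilon^{-2}$ with $|\gamma|^2\epsilon^{-2}\le C\epsilon^{2s-1}\ll\epsilon$ under Hyp.~3, so nothing is lost in the exponent. The self-contained route costs a page but makes transparent where the sharp power $\epsilon^{-1/2}$ comes from; the paper's route is shorter but opaque to a reader without \cite{S1} at hand.
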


\begin {proof}
Indeed, from Theorem 2 by \cite {S1} and its remarks it follows that 
\be
\| \nabla \psi \|_{L^2} \le C \sqrt {\Lambda } \ \mbox { and } \ \| \psi \|_{L^p} \le C \Lambda^{\frac {p-2}{4p}} 
\ee
for some $C>0$ and $\epsilon$ small enough, where 
\be
\Lambda = \frac {\mathcal {E}(\psi_0)-V_{min}}{\epsilon^2} 
\ee
and where $V_{min} = \min_x [V_N (x) + f W_N (x)]$. \ In particular, since $f W_N (x) \ge - f L = O (\epsilon^s )$ for some $s > 2$, because $L$ is 
fixed, and since $\Pi_c \psi_0 =0$ then $\Lambda \sim \epsilon^{-1}$; therefore 
\be
\| \nabla \psi \|_{L^2} \le C \epsilon^{-1/2} \ \mbox { and } \ \| \psi \|_{L^p} \le C \epsilon^{- \frac {p-2}{4p}} \, .
\ee
\end {proof}

Hence, the global well-posedness of the NLS follows \cite {CW,C}.

\subsection {N-mode approximation}

Let $\psi$ be the normalized solution of the NLS equation written in the formula
\bee
\psi = \psi_1 + \psi_c \, ,\ \psi_1 = \Pi \psi = \sum_{\ell =1}^N c_\ell u_\ell \ \mbox { and } \ \psi_c = \Pi_c \psi \, ,  \label {Eq8}
\eee
for some complex-valued functions $c_\ell (t)$. \ By substituting (\ref {Eq8}) into the NLS (\ref {Eq3}) then it takes the formula
\bee
\left \{ 
\begin {array}{ll}
i \epsilon \partial_t c_\ell = \langle u_\ell , H_N \psi_1 \rangle + \gamma \langle u_\ell , |\psi (\cdot , \tau )|^2 \psi (\cdot ,\tau ) \rangle + f 
\langle u_\ell , W_N \psi (\cdot , \tau ) \rangle \\ 
i \epsilon \partial_t \psi_c = H_N \psi_c + \gamma \Pi_c  |\psi (\cdot , \tau )|^2 \psi (\cdot ,\tau ) + f \Pi_c W_N \psi (\cdot , \tau ) 
\end {array}
\right. \label {Eq9}
\eee

We are going now to get an \emph {a priori} estimate of the remainder term $\psi_c$. \ First of all we rescale the time $t \to \tau = \frac {\beta}{\epsilon} t$ 
and we redefine the wavefunction up to a gauge factor $\psi (x,t) \to \psi (x,\tau) := e^{-i \lambda_D t/\epsilon} \psi (x,t)$. \ The Bloch period becomes
\be
\tau_B =  \frac {\beta}{\epsilon} T = \frac {2\pi \beta}{|f| b}
\ee
Hence, (\ref {Eq9}) becomes (where ${}'$ denotes the derivative with respect to $\tau$)
\bee
\left \{ 
\begin {array}{ll}
i \beta c_\ell' = \langle u_\ell , (H_N -\lambda_D) \psi_1 \rangle  + \gamma \langle u_\ell , |\psi (\cdot , t )|^2 \psi (\cdot ,t) \rangle + f 
\langle u_\ell , W_N \psi (\cdot , t) \rangle \\ 
i \beta \psi_c' = (H_N -\lambda_D) \psi_c + \gamma \Pi_c  |\psi (\cdot , t )|^2 \psi (\cdot ,t) + f \Pi_c W_N \psi (\cdot , t) 
\end {array}
\right. \label {Eq10}
\eee

\begin {theorem} \label {Teo1} {Let Hyp.1-4 be satisfied; then it follows that the remainder $\psi_c$ can be estimated for times of order of the Bloch period. \ That is 
for any fixed $M \in \N$ it follows that 
\be
\max_{\tau \in [0,M \tau_B ]} \| \psi_c (\cdot, \tau ) \|_{L^2} \le C  \frac {|f|}{\epsilon} 
\ee
for some positive constant $C>0$.}
\end {theorem}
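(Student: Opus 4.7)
The plan is to bound $\psi_c(\tau)$ via the Duhamel representation derived from the second equation of (\ref{Eq10}) with $\psi_c(0)=0$ (Hyp.~\ref{Ipo4}):
\[
\psi_c(\tau) = \frac{1}{i\beta}\int_0^\tau e^{-iA(\tau-s)/\beta}\Pi_c g(s)\,ds,\qquad A := H_N-\lambda_D,\quad g := \gamma|\psi|^2\psi + fW_N\psi.
\]
The a priori estimates of Lemma~\ref{lemma1} combined with Hyp.~\ref{Ipo3} give $\|g(s)\|_{L^2}\le C|f|$, since $\||\psi|^2\psi\|_{L^2}=\|\psi\|_{L^6}^3\le C\epsilon^{-1/2}$, $|\gamma|\epsilon^{-1/2}\le C|f|$, and $W_N$ is uniformly bounded by $L$. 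The essential spectral input is Remark~\ref{Remark4}: on $\Pi_c(L^2)$ one has $\|A^{-1}\Pi_c\|_{\mathcal{L}(L^2)}\le C/\epsilon$.

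The key step is integration by parts in $s$, using $e^{-iA(\tau-s)/\beta}\Pi_c = -i\beta A^{-1}\Pi_c\,\partial_s e^{-iA(\tau-s)/\beta}$, which yields
\[
\psi_c(\tau) = -A^{-1}\Pi_c g(\tau) + A^{-1}\Pi_c e^{-iA\tau/\beta}g(0) + \int_0^\tau A^{-1}\Pi_c e^{-iA(\tau-s)/\beta}\,g'(s)\,ds.
\]
The two boundary terms are immediately bounded by $\|A^{-1}\Pi_c\|\cdot\|g\|\le C|f|/\epsilon$, as required. For the integral remainder I would substitute the NLS equation $i\beta\psi' = A\psi + \gamma|\psi|^2\psi + fW_N\psi$ into $g'$; the apparently dangerous piece is $(f/i\beta)\Pi_c W_N A\psi$, whose $A/\beta$ factor would defeat the $1/\epsilon$ gain from the outer $A^{-1}\Pi_c$. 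The algebraic rescue is the commutator identity $\Pi_c W_N A = A\Pi_c W_N + \Pi_c[W_N,A]$: the first piece cancels the outer $A^{-1}$ and leaves only $(f/i\beta)\Pi_c W_N\psi$, while $[W_N,A] = -\epsilon^2(2W_N'\partial_x + W_N''\cdot)$ satisfies $\|[W_N,A]\psi\|_{L^2}\le C\epsilon^{3/2}$ by combining the $H^1$ bound of Lemma~\ref{lemma1} on the regular $W_N'\partial_x\psi$ term with the exponential Agmon decay of $\psi$ at $x=\pm L$ for the distributional $W_N''$ part.

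The surviving integrated term $(f/i\beta)\int_0^\tau e^{-iA(\tau-s)/\beta}\Pi_c W_N\psi(s)\,ds$ is then split via $\psi=\psi_1+\psi_c$. The $\psi_1$-contribution is pushed down to $O(|f|/\epsilon)$ by a second integration by parts, exploiting that $\Pi_c W_N u_\ell\approx(x-x_\ell)u_\ell$ has $L^2$-norm $O(\sqrt{\epsilon})$ (by the symmetry \ref{Eq6} of $u_\ell$ around $x_\ell$) together with the a priori bound $\|\psi_1'\|_{L^2}\le C(1+|f|/\beta)$ coming from the first equation of (\ref{Eq10}). The $\psi_c$-contribution enters linearly and produces the Gronwall-type inequality
\[
\|\psi_c(\tau)\|_{L^2}\le \frac{C|f|}{\epsilon} + K\frac{|f|}{\beta}\int_0^\tau\|\psi_c(s)\|_{L^2}\,ds,
\]
with $K$ depending only on $L$. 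Since the time horizon has length $M\tau_B=2\pi M\beta/(|f|b)$, the Gronwall exponent $K|f|/\beta\cdot M\tau_B=O(M)$ remains bounded, and the claim $\|\psi_c\|\le C|f|/\epsilon$ closes on $[0,M\tau_B]$ with a constant growing at most exponentially in $M$. The main obstacle is handling the commutator $[W_N,A]$ rigorously in view of the low regularity of $W_N$ (only Lipschitz); the escape combines the $L^\infty$ bound on $W_N'$ with Agmon decay of $\psi$ to show that the distributional contribution is $\tilde{\mathcal{O}}(e^{-S/\epsilon})$ for some $S>0$ and can be absorbed into the remainder.
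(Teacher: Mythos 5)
Your overall skeleton (Duhamel formula for $\psi_c$ with $\psi_c(0)=0$, integration by parts to exploit $\|[H_N-\lambda_D]^{-1}\Pi_c\|_{{\mathcal L}(L^2\to L^2)}\le C\epsilon^{-1}$, then Gronwall on $[0,M\tau_B]$ with exponent $O(M)$) is the same as the paper's, but the point at which you integrate by parts opens two genuine gaps. The paper first splits the source into $A=\gamma|\psi_1|^2\psi_1+fW_N\psi_1$, which depends only on $\psi_1=\Pi\psi$, and $B$, which collects every term containing a factor $\psi_c$; it integrates by parts \emph{only} the $A$-integral, so the time derivative it needs is $\partial_\tau\psi_1=\sum_\ell c_\ell'\,u_\ell$, controlled directly by the projected ODE via (\ref{Eq11}), while the $B$-integral is estimated crudely and fed to Gronwall. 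You instead integrate by parts the full source $g=\gamma|\psi|^2\psi+fW_N\psi$ and must therefore substitute the PDE for $\psi'$. Besides the term $\frac{f}{i\beta}W_N(H_N-\lambda_D)\psi$ that you discuss, this produces $\frac{\gamma}{i\beta}\bigl(2|\psi|^2(H_N-\lambda_D)\psi+\psi^2\overline{(H_N-\lambda_D)\psi}\bigr)$ from $\partial_s(|\psi|^2\psi)$, which you do not address. Here $(H_N-\lambda_D)$ sits inside a multiplication by $|\psi|^2$, so the outer $[H_N-\lambda_D]^{-1}\Pi_c$ cannot cancel it, and $\|(H_N-\lambda_D)\psi\|_{L^2}$ is not controlled by the available a priori estimates (Lemma \ref{lemma1} gives only $H^1$ control; energy conservation gives $\langle\psi,H_N\psi\rangle$, not $\|H_N\psi\|_{L^2}$). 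A further commutator $[|\psi|^2,H_N]$ would require second derivatives of $|\psi|^2$, again unavailable.

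Even for the $f$-term your rescue does not close. The commutator identity returns $\frac{f}{i\beta}\Pi_cW_N\psi$, which simply reproduces the integral you started from -- your subsequent split $\psi=\psi_1+\psi_c$ at that stage is the paper's $A$/$B$ decomposition in disguise -- plus the remainder $\frac{f}{i\beta}[H_N-\lambda_D]^{-1}\Pi_c[W_N,H_N]\psi$. Granting your bound $\|[W_N,H_N]\psi\|_{L^2}\le C\epsilon^{3/2}$ (note the piece $\epsilon^2W_N'\partial_x\psi$ is supported on all of $[-L,L]$, where the wells sit, so Agmon decay does not help there; the bound comes solely from $\|\partial_x\psi\|_{L^2}\le C\epsilon^{-1/2}$), integrating over $[0,M\tau_B]$ with $\tau_B=2\pi\beta/(|f|b)$ yields a contribution of size $\frac{|f|}{\beta}\cdot\epsilon^{-1}\cdot\epsilon^{3/2}\cdot M\tau_B=O(M\epsilon^{1/2})$, independent of $f$. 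Since Hyp.~\ref{Ipo3} allows $|f|$ as small as $e^{-(S_0-\rho)/\epsilon}$, and in any case requires $|f|\le C\epsilon^{s}$ with $s>2$, one has $\epsilon^{1/2}\gg|f|/\epsilon$, so this remainder cannot be absorbed into the claimed bound. The repair is to avoid substituting the PDE altogether: integrate by parts only the part of the source built from $\psi_1$, whose $\tau$-derivative is bounded by $C\Gamma^2\beta^{-1}$ through the ODE for the $c_\ell$, so that neither $H_N\psi$ nor any commutator ever appears; the terms carrying a factor $\psi_c$ need no integration by parts because they already contribute $\|\psi_c\|_{L^2}$ to the Gronwall inequality.
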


\begin {proof} From the first equation of (\ref {Eq10}) and recalling that 
\be
\sum_{\ell =1}^N |c_\ell (\tau )|^2 =  \| \psi_1 \|_{L^2}^2 = 1- \| \psi_c \|_{L^2}^2 \le 1 
\ee
then \emph {a priori} estimate follows 
\bee
|c_\ell ' | &\le & 
\frac {\langle u_\ell , (H_N -\lambda_D) \psi_1 \rangle}{\beta} + \frac {|\gamma |}{\beta}  \| \psi \|_{L^\infty}^2 + 
\frac {|f|}{\beta} \| W_N \|_{L^\infty} \nonumber \\  
&\le & C + \frac {|\gamma | \epsilon^{-1/2}}{\beta} + \frac {|f|}{\beta} L \label {Eq11}
\eee
because $\| u_\ell \|_{L^2} =1$ and $\| \psi \|_{L^2} =1$, and from Remark \ref {Remark5} iv. and Lemma \ref {lemma1}.

Concerning $\psi_c $ it satisfies to the following integral equation
\be
\psi_c = I + II 
\ee
where we set 
\be
I &:=&  -\frac {i}{\beta} \int_0^\tau e^{-i(H_N-\lambda_D) (\tau -s)/\beta } \Pi_c A ds \\
II &:=&  -\frac {i}{\beta} \int_0^\tau e^{-i(H_N-\lambda_D) (\tau -s)/\beta } \Pi_c B ds
\ee
and where $A$ and $B$ are defined as
\be
A &:=& \gamma  |\psi_1 |^2 \psi_1 + f W_N \psi_1 \\ 
B & := & \gamma \left [ \bar \psi_1 \psi_c^2 +  |\psi_c |^2 \psi_c +  2|\psi_1 |^2 \psi_c +  2 |\psi_c|^2 \psi_1 +\psi_1^2 \bar \psi_c \right ] + f W_N \psi_c
\ee
such that 
\be
A+B = \gamma |\psi |^2 \psi + f W_N \psi \, . 
\ee
By means of standard arguments \cite {S1} and making use of the fact that the operator $W_N$ is bounded then it 
follows that

\begin {lemma} \label {lemma2} {Let
\be
\Gamma = |\gamma | \epsilon^{-1/2} + |f| \, .
\ee
Then the functions $A$ and $B$ are such that
\be
\| A \|_{L^2} \le C \Gamma \, , \ \| B \|_{L^2} \le C \Gamma \| \psi_c \|_{L^2} \ \mbox { and } \ \left \| \frac {\partial A}{\partial \tau } \right \|_{L^2} \le C \Gamma^2 
\beta^{-1}  \, .
\ee
}
\end {lemma}

\begin {proof}
Indeed, 
\be
\| A\|_{L^2} \le |\gamma |\, \| \psi_1 \|_{L^\infty}^2 \| \psi_1 \|_{L^2} + |f| \| W_N \|_{L^\infty} \| \psi_1 \|_{L^2} \le C 
\left [ |\gamma | \epsilon^{-1/2} + |f| \right ] 
\ee
since $\| \psi_1 \|_{L^\infty} \le C \max_{\ell} \| u_\ell \|_{L^\infty} \le C \epsilon^{-1/4}$.  \ Similarly, the estimate of the function $B$ 
follows recalling that $\| \psi_c \|_{L^\infty} \le C \epsilon^{-1/4}$ from Lemma \ref {lemma1}. \ Finally, the estimate concerning 
$\frac {\partial A}{\partial \tau }$ immediately follows from (\ref {Eq11}).
\end {proof}

Hence, the estimates of the integrals I and II follow; in particular, integral II can be simply estimated as
\be
\| II \|_{L^2} \le C \Gamma \beta^{-1} \int_0^\tau  \| \psi_c (\cdot , s ) \|_{L^2} ds 
\ee
since 
\be
\left \| e^{-i (H_N - \lambda_D) (\tau -s)/\beta } \right \|_{{\mathcal L} (L^2 \to L^2)} = 1 \, . 
\ee
On the other hand, before to get the estimate of integral I we perform an integration by parts in order to gain a pre-factor $\beta$:
\be
I &=& \left [ - i e^{-i (H_N - \lambda_D) (\tau -s)/\beta } [H_N - \lambda_D ]^{-1} \Pi_c A \right ]_0^\tau + \\ 
&& \ \ + i  
\int_0^\tau e^{-i (H_N - \lambda_D) (\tau -s)/\beta } [H_N - \lambda_D ]^{-1} \Pi_c \frac {\partial A}{\partial s} ds
\ee
From this fact and recalling that (Remark 4)
\be
\| [H_N - \lambda_D ]^{-1} \Pi_c \|_{{\mathcal L} (L^2 \to L^2)} \le C \epsilon^{-1} 
\ee
then 
\be
\| I \|_{L^2} \le C \epsilon^{-1} \max_{s \in [0,\tau ]} \left [ \| A\|_{L^2} + \tau \left \| \frac {\partial A}{\partial s} \right \|_{L^2} \right ] 
\le C \epsilon^{-1} \Gamma [1+ \Gamma \beta^{-1} \tau ] \, .  
\ee
Therefore, we have that
\be
\| \psi_c \|_{L^2} \le C \Gamma \beta^{-1} \int_0^\tau \| \psi_c (\cdot , s) \|_{L^2} \, ds + C \epsilon^{-1} \Gamma ( 1+ \Gamma \beta^{-1}  \tau )\, . 
\ee
From the Gronwall's Lemma it follows that
\be
\| \psi (\cdot ,\tau ) \|_{L^2} \le C \epsilon^{-1} \Gamma ( 1+ \Gamma \beta^{-1}  \tau ) e^{C \Gamma \beta^{-1} \tau } 
\ee

In particular we observe that 
\be
\max_{\tau \in [0,M\tau_B]} \| \psi (\cdot ,\tau ) \|_{L^2} 
\le  C \epsilon^{-1} \Gamma ( 1+ \beta^{-1} \Gamma M \tau_B ) e^{C \Gamma \beta^{-1} M \tau_B } \le C \Gamma \epsilon^{-1} \le C \frac {|f|}{\epsilon} 
\ee
proving the Theorem since $\Gamma \le C |f|$ from Hyp. 3 and since $\tau_B = \frac {2\pi}{b} \frac {\beta }{|f|}$. \end {proof}

We are going now to estimate the solutions $c_\ell$ of the first equation of (\ref {Eq10}) which can be written as
\be
i \beta c_\ell' = \langle u_\ell , (H_N -\lambda_D) \psi_1 \rangle  +  \langle u_\ell , A \rangle + \langle u_\ell , B \rangle 
\ee
where the term $\langle u_\ell , (H_N -\lambda_D) \psi_1 \rangle$ can be represented by property iv. of Remark \ref {Remark5}. \ Concerning the term 
$\langle u_\ell , B \rangle$ the following estimate uniformly holds with respect to the index $\ell$
\be
\left | \langle u_\ell , B \rangle \right | \le \| B \|_{L^2} \le C \Gamma \| \psi_c \|_{L^2} \, . 
\ee
Furthermore
\be
\langle u_\ell , A \rangle &=& \gamma \sum_{j,k,m=1}^N \bar c_j c_k c_m \langle u_\ell ,  \bar u_j u_k u_m \rangle + f \sum_{j=1}^N c_j \langle u_\ell ,   W_N u_j \rangle \\ 
&=& \gamma |c_\ell |^2 c_\ell \| u_\ell \|_{L^4}^4 + f c_\ell \langle u_\ell , W_N u_\ell \rangle + \gamma r_\ell^a + f r_\ell^b 
\ee
where
\be
r_\ell^a = \sum_{j,k,m \in \N_N \ : \ |j-\ell |+|m-\ell |+|k-\ell |>0} \bar c_j c_k c_m \langle u_\ell ,  \bar u_j u_k u_m \rangle 
\ee
and 
\be
r_\ell^b = 
\sum_{j,\ell \in \N_N\, , \ j\not= \ell} c_j \langle u_\ell , W_N u_j \rangle
\ee
are remainder terms. 

We have that

\begin {lemma} \label {lemma3} 
The following estimates uniformly hold with respect to the indexes $\ell \, ,\ j \, ,\  m$ and $k$:

\begin {itemize}

\item [i.] $\langle u_\ell , W_N u_\ell \rangle = \ell b + \asy \left (e^{-S_0/\epsilon } \right ) $;

\item [ii.] $\langle u_\ell , W_N u_j \rangle = \asy \left (e^{-S_0 |j-\ell |/\epsilon } \right ) $;

\item [iii.] $\langle u_\ell , \bar u_j u_m u_k \rangle = \asy \left (e^{-S_0 r /\epsilon } \right ) $ where 
\be
r = \max \left [ |j-\ell |, |m-\ell |, |k-\ell |, |j-m |, |j-k |, |k-m |  \right ] .
\ee

\end {itemize}

\end {lemma}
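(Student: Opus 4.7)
The plan is to exploit the near-localization of $u_\ell$ on the $\ell$-th well (property i of Remark \ref{Remark5}) together with the decay properties ii and iii of Remark \ref{Remark5} to reduce all three estimates to elementary computations involving the single-well semiclassical states $u_\ell^{sc}$.

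For (i), I would first write $u_\ell = u_\ell^{sc} + R_\ell$ with $\| R_\ell \|_{L^p} = \asy(e^{-S_0/\epsilon})$. Expanding $\langle u_\ell, W_N u_\ell \rangle$, the two cross-terms and the $R_\ell \times R_\ell$ term are each bounded by $\|W_N\|_{L^\infty} \|R_\ell\|_{L^2} \le L\cdot \asy(e^{-S_0/\epsilon})$, so they are of the desired order. For the leading term $\langle u_\ell^{sc}, W_N u_\ell^{sc}\rangle$, I use that the support of $u_\ell^{sc}$ is concentrated near $x_\ell$ while $W_N(x)=x$ on $[-L,L]\supset\{x_\ell\}_\ell$; the Agmon decay of $u_0^{sc}$ makes $\int_{|x|>L}(W_N(x)-x)|u_\ell^{sc}(x)|^2 dx$ exponentially small. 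Changing variables $y=x-x_\ell$ in the resulting integral and invoking the evenness of $u_0^{sc}$ from (\ref{Eq6}), the term $\int y\,|u_0^{sc}(y)|^2\,dy$ vanishes, leaving $x_\ell\|u_0^{sc}\|_{L^2}^2 = x_\ell$ (which, up to the choice of origin implicit in the indexing, gives the stated $\ell b$).

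For (ii), since $W_N$ is bounded by $L$, I simply apply property ii of Remark \ref{Remark5}:
\begin{eqnarray*}
|\langle u_\ell, W_N u_j\rangle | \le \|W_N\|_{L^\infty}\, \| u_\ell u_j\|_{L^1}\le L\cdot \asy\!\left(e^{-S_0|j-\ell|/\epsilon}\right),
\end{eqnarray*}
which already has the correct form.

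For (iii), the strategy is to pick a pair of indices achieving the maximum distance $r$ and isolate it via H\"older's inequality. Let $p,q \in \{\ell,j,m,k\}$ be such that $|p-q|=r$, and call $s,t$ the remaining two indices. Then
\begin{eqnarray*}
\left|\langle u_\ell, \bar u_j u_m u_k\rangle\right| \le \int |u_\ell u_j u_m u_k|\,dx \le \|u_s\|_{L^\infty}\|u_t\|_{L^\infty}\, \|u_p u_q\|_{L^1}.
\end{eqnarray*}
Property iii of Remark \ref{Remark5} gives $\|u_s\|_{L^\infty},\|u_t\|_{L^\infty}\le C\epsilon^{-1/4}$, and property ii gives $\|u_p u_q\|_{L^1}=\asy(e^{-S_0 r/\epsilon})$; the polynomial prefactor $\epsilon^{-1/2}$ is absorbed into $\asy$, yielding the claim uniformly in the four indices.

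I expect the only delicate step to be part (i): the cancellation of the linear moment through the evenness of $u_0^{sc}$ is what converts what could be an $O(1)$ error (if $u_\ell^{sc}$ were merely localized but asymmetric) into an exponentially small remainder, and the replacement $W_N\to x$ on the support of $u_\ell^{sc}$ has to be controlled carefully using the Agmon decay so that the truncation at $\pm L$ contributes only at order $\asy(e^{-S_0/\epsilon})$. Parts (ii) and (iii) are then direct consequences of the $L^1$ and $L^\infty$ bounds already collected in Remark \ref{Remark5}.
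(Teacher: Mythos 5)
Your proposal is correct and takes essentially the same route as the paper: part (i) is proved by localizing $u_\ell$ on its well and using the (approximate) evenness of $u_0$ to kill the linear moment, part (ii) follows from $\|W_N\|_{L^\infty}\|u_\ell u_j\|_{L^1}$, and part (iii) is exactly the paper's H\"older argument pairing the two most distant indices in $L^1$ and the remaining two in $L^\infty$ via Remark 5 ii--iii. The only cosmetic difference is that in (i) you expand $u_\ell=u_\ell^{sc}+R_\ell$ explicitly, whereas the paper splits the integral over $I_\ell=[x_\ell-b,x_\ell+b]$ and its complement; both hinge on the same localization and symmetry facts.
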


\begin {proof}
Indeed, let $I_\ell = \left [ x_\ell -  b, x_\ell +  b \right ]$, then 
\be
\langle u_\ell , W_N u_\ell \rangle = \int_{I_\ell} |u_\ell (x)|^2 x dx + \int_{\R \setminus I_\ell} |u_\ell (x)|^2 W_N (x) dx
\ee
where $u_\ell (x) = u_0 (x- x_\ell ) + \asy \left (e^{-S_0 /\epsilon } \right )$ from (\ref {Eq6}) and Remark 4. \ Therefore  
\be
\int_{I_\ell} |u_\ell (x)|^2 x dx &=& \ell b \int_{I_0} |u_0 (x)|^2  dx + \int_{I_0} |u_0 (x)|^2 x dx + \asy \left (e^{-S_0 /\epsilon } \right ) \\ 
&=& \ell b \left [ \| u_0 \|^2_{L^2} - \| u_0 \|^2_{L^2 (\R \setminus I_0)} \right ] + \asy \left (e^{-S_0 /\epsilon } \right )  
\ee
where $u_0$ is normalized and $\int_{I_0} |u_0 (x)|^2 x dx = \asy \left (e^{-S_0/\epsilon } \right )$ because $u_0 (x)= u_0 (-x) +\asy \left 
(e^{-S_0 /\epsilon } \right )$. \ From this fact and since $\| u_\ell \|_{L^2 (\R \setminus I_\ell )} = \asy (e^{-S_0/\hbar }) $ 
(see Lemma 4 iii. and Lemma 5 by \cite {FS}) then the asymptotic behavior i. follows. \ The other two asymptotic behaviors ii. and iii. similarly 
follow from property ii. by Remark \ref {Remark5}; indeed 
\be
| \langle u_\ell , W_N u_j \rangle | \le \| W_N \|_{L^\infty} \| u_\ell u_j \|_{L^1} = \asy \left (e^{-S_0 |j-\ell |/\epsilon } \right )
\ee
and, where we assume that $r = |j-\ell |$, 
\be
| \langle u_\ell , \bar u_j u_m u_k \rangle | \le \| u_m\|_{L^\infty} \| u\|_{L^\infty} \| u_\ell u_j \|_{L^1} = \asy \left (e^{-S_0 r  /\epsilon } \right )
\ee
proving so the estimates ii. and iii..
\end {proof}

From this Lemma and from the previous computation it follows that the first equation of (\ref {Eq10}) becomes a DNLS of the form 
\bee
i \beta c_\ell' = - \beta \sum_{m=1}^N {\mathcal T}_{\ell , m} c_m + \gamma \| u_\ell \|_{L^4}^4 |c_\ell |^2 c_\ell + f \ell b c_\ell +  
\asy \left (e^{-S_0 /\epsilon } \right ) \label {Eq15}
\eee
where 
\be
\| u_\ell \|_{L^4}^4 = \| u_0 \|_{L^4}^4 +  \asy \left (e^{-S_0 /\epsilon } \right )
\ee
and where the remainder terms $ \asy \left (e^{-S_0 /\epsilon } \right )$ are uniform with respect to the index $\ell$.

\begin {remark} \label {Remark6}
In fact, if $v(x)$ is not an even function then by means of standard semiclassical arguments it follows that property Lemma 3 i. becomes 
\be
\langle u_\ell , W_N u_\ell \rangle = \ell b +c  + \asy \left (e^{-S_0/\epsilon } \right ) 
\ee
for some constant $c$ independent of the index $\ell$. \ In such a case we must add the term $f c c_\ell$ to the right hand side of the DNLS above and, by means of 
a gauge choice $c_\ell \to c_\ell e^{-i \frac {f c}{\beta }\tau }$, we can remove this term obtaining again equation (\ref {Eq15}). 
\end {remark}

Now, we are able to prove that

\begin {theorem} \label {Teo2}
Let $d_\ell (\tau )$ be the solutions of the DNLS
\bee
i \beta d_\ell' = - \beta \sum_{m=1}^N {\mathcal T}_{\ell , m} d_m + \gamma \| u_0^{sc} \|_{L^4}^4 |d_\ell |^2 d_\ell + f \ell b d_\ell \label {Eq16}
\eee
satisfying to the initial conditions $d_\ell (0) = c_\ell (0)$, where $c_\ell (\tau )$ and $\psi_c$ are the solutions of (\ref {Eq10}). \ Then, for any fixed $M \in \N$ it 
follows that
\be
\max_{\tau \in [0,M\tau_B] ,\, \ell =1,2, \ldots ,N} | c_\ell (\tau ) - d_\ell (\tau )| = \asy \left (e^{-S_0 /\epsilon } \right ) \ 
\mbox { as } \ \epsilon \to 0 \, . 
\ee
\end {theorem}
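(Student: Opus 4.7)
The plan is to write an ODE for the pointwise difference $e_\ell(\tau) := c_\ell(\tau) - d_\ell(\tau)$ (with $e_\ell(0) = 0$) and close a Gronwall estimate on $[0, M\tau_B]$. Throughout I use that $\sum_\ell |c_\ell|^2 = \|\psi_1\|_{L^2}^2 \le 1$ and that (\ref{Eq16}) is a Hamiltonian DNLS, so $\sum_\ell |d_\ell|^2$ is conserved; together with the matched initial data this yields $|c_\ell|, |d_\ell| \le 1$ for every $\ell$ and every $\tau$.

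Subtracting (\ref{Eq16}) from (\ref{Eq15}), the hopping and Stark contributions produce $-\beta\sum_m \mathcal{T}_{\ell,m} e_m$ and $f\ell b\, e_\ell$, while the cubic term splits as $\gamma\|u_0^{sc}\|_{L^4}^4(|c_\ell|^2 c_\ell - |d_\ell|^2 d_\ell) + \gamma(\|u_\ell\|_{L^4}^4 - \|u_0^{sc}\|_{L^4}^4)|c_\ell|^2 c_\ell$. The second piece is $|\gamma|\,\asy(e^{-S_0/\epsilon})$ by Remark \ref{Remark5}(i); the first is controlled by the Lipschitz inequality $\bigl||c|^2 c - |d|^2 d\bigr| \le 3\max(|c|,|d|)^2 |c-d|$ and $\|u_0^{sc}\|_{L^4}^4 \le C\epsilon^{-1/2}$ (Remark \ref{Remark5}(iii)), giving $C|\gamma|\epsilon^{-1/2}|e_\ell|$. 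Setting $E(\tau) := \max_\ell |e_\ell(\tau)|$, dividing by $\beta$, and bounding $|\mathcal{T}_{\ell,m}|\le 1$ yields
\be
E(\tau) \le C\int_0^\tau \left(1 + \frac{|\gamma|\epsilon^{-1/2} + |f|}{\beta}\right) E(s)\,ds + \frac{\tau}{\beta}\,\asy(e^{-S_0/\epsilon}),
\ee
and Gronwall then produces
\be
E(\tau) \le \frac{\tau}{\beta}\,\asy(e^{-S_0/\epsilon})\,\exp\left(C\tau\left[1 + \frac{|\gamma|\epsilon^{-1/2}+|f|}{\beta}\right]\right).
\ee

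At $\tau = M\tau_B$, the identity $\tau_B = 2\pi\beta/(|f|b)$ delivers $\tau_B|f|/\beta = 2\pi/b$ and $\tau_B|\gamma|\epsilon^{-1/2}/\beta = (2\pi/b)(|\gamma|\epsilon^{-1/2}/|f|) \le C$ by Hypothesis \ref{Ipo3}. The Gronwall exponent is therefore $O(1)$. The remaining prefactor $M\tau_B/\beta = 2\pi M/(|f|b)$ multiplied by $\asy(e^{-S_0/\epsilon})$ supplies the claimed estimate, using the lower bound $|f|\ge C e^{-(S_0-\rho)/\epsilon}$ to absorb $|f|^{-1}$ into the $\asy$ notation.

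The hard step is this final one: reconciling the $\beta^{-1}$ factors that emerge automatically from the rescaled-time equation against the target exponential smallness. The algebraic cancellation $\tau_B|f|/\beta = 2\pi/b$, together with the ratio $|\gamma|\epsilon^{-1/2}/|f|\le C$ forced by Hypothesis \ref{Ipo3}, is exactly what keeps the Gronwall exponent bounded as $\epsilon\to 0^+$; the two-sided window on $|f|$ in Hypothesis \ref{Ipo3} is then precisely what lets a one-Bloch-period propagation preserve the exponentially small remainder in (\ref{Eq15}).
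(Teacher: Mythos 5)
Your argument is correct and is precisely the argument the paper compresses into its one-sentence proof: subtract (\ref{Eq16}) from (\ref{Eq15}), run a Gronwall estimate on the difference over $[0,M\tau_B]$, and use the identity $\tau_B=\frac{2\pi}{b}\frac{\beta}{|f|}$ together with Hypothesis \ref{Ipo3} both to keep the Gronwall exponent bounded as $\epsilon\to 0^+$ and to absorb the resulting $|f|^{-1}$ prefactor into the exponentially small remainder. The details you supply --- conservation of $\sum_\ell |d_\ell|^2$ for the limiting DNLS, the Lipschitz bound on the cubic term, and the $O(1)$ Gronwall exponent --- are exactly the steps the paper leaves implicit.
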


\begin {proof}
The proof is a simply consequence of equation (\ref {Eq15}) and from the fact that $\tau_B = \frac {2\pi }{b} \frac {\beta}{f} $ and Hyp.3. 
\end {proof}

\section {Numerical analysis of a real model}

We consider the experiment where a cloud of ultracold Strontium atoms ${}^{88} Sr$ are trapped in a one-dimensional optical lattice with potential 
(\ref {Eq1}). \ Realistic data for the experiment are \cite {PWTAPT}:

\begin {itemize}

 \item [-] Lattice period: $b=\lambda_L/2 = 266 \, nm$, $\lambda_L = 532 \, nm$;

 \item [-] Lattice potential depth: $V_0 = \Lambda_0 \cdot E_R$ where $E_R$ is the photon recoil energy $E_R = \frac {2 \pi^2 \hbar^2}{m\lambda_L^2} = 
 50.38\, kHz \cdot \hbar$ and where $\Lambda_0$ is between $3$ and $10$;
 
 \item [-] Mass of the strontium $88$ isotope: $m= 87.91 \, au = 1.46 \cdot 10^{-22} \, gr$;
 
 \item [-] Effective one-dimensional nonlinearity strength: let $\gamma_{3D} = \frac {4{\mathcal N} \pi a_s \hbar^2}{m}$ be the effective nonlinearity strength for the 
 three-dimensional Gross-Pitaevskii equation, then it is expected that the effective one-dimensional nonlinearity strength $\gamma$ is of the order \cite {SPR}
 \be
 \gamma \approx \frac {\gamma_{3D}}{2 \pi d_\perp^2} 
 \ee
 where $d_\perp$ is the oscillator length of the transverse confinement; here $a_s$ denotes the scattering length of the Strontium $88$ isotope: 
 $a_s = -a_0 \div 13 a_0$, where $a_0$ is the Bohr radius; ${\mathcal N}$ is the number of atoms of the condensate; in typical experiments 
$d_\perp \approx 180 \cdot 10^{-6}\, m$ and ${\mathcal N} = 10^5 \div 10^6$;

\item [-] Acceleration constant $ g = 9.807 \, m/s^2$.
 
\end {itemize}

The confined BEC is governed by Eq. (\ref {Eq3}) and here we make use of the $N$-mode approximation (\ref {Eq16}), that is the wavefunction $\psi$ 
has the form $\psi  \sim \sum_\ell c_\ell  u_\ell $ where $c_\ell$ are the solutions of (\ref {Eq16}) and where $u_\ell$ are 
functions localized on the $\ell$-th lattice site. \ In order to justify the validity of such an approximation we'll 
check if the model is in the \emph {semiclassical regime}, that is if the first band is almost flat and if semi-classical approximation $u_\ell^{sc}$ agrees 
or not with the Wannier function $u_\ell^W$. \ Such a qualitative criterion has been also adopted by other authors \cite {AKKS,BGHH,EHLZCMA} and we'll see 
that our results agree with the results contained in these papers. \ In particular, in \cite {EHLZCMA} has been computed the hopping matrix elements 
$\langle u_\ell , H_N u_j \rangle $ too, where it has been numerically verified that these coefficients are negligible when $|j- \ell |>1$ for 
$\Lambda_0 \ge 10$; thus, for such values of $\Lambda_0$ it is admitted that the $N$-mode approximation, consisting to describe (\ref {Eq3}) in terms of a 
nearest-neighbor model (\ref {Eq16}), works. 

\subsection {Validity of the semiclassical approximation}

The semiclassical approximation $u_0^{sc}(x)$ of the wavefunction has dominant behavior 
\bee
u_0^{sc} (x) = \frac {(m \mu )^{1/8}}{(\pi \hbar )^{1/4}} e^{-\sqrt {m \mu} x^2/2\hbar} \label {Eq19}
\eee
in the semiclassical limit, where $\mu = \frac {d^2 V_{per} (0)}{dx^2} = 2 V_0 k_L^2$, $V_0 = \Lambda_0 E_R$; it is normalized $\| u_0^{sc} \|_{L^2} =1$. \ We may remark that the 
effective semiclassical parameter in adimensional units is given by 
\be
\frac {1}{\sqrt {\Lambda_0}} = \frac {2 \pi^2 \hbar}{b^2 \sqrt {m \mu}} \, ,
\ee
and then the semiclassical approximation may be written as 
\be
u_0^{sc} (x) = \left [ \frac {2\pi \sqrt {\Lambda_0}}{b^2} \right ]^{1/4} e^{-x^2 \pi^2 \sqrt {\Lambda_0}/b^2} \, . 
\ee
Hence
\be
\| u_0^{sc} \|_{L^4}^{4} = \left [ \frac {m \mu }{(\pi \hbar )^2} \right ]^{1 /4} \sqrt {\frac {\pi}{2}} = \frac {\pi \Lambda_0^{1/4}}{b} .
\ee
We'll see that for $\Lambda_0$ ``large enough'' (i.e. $\Lambda_0 \ge 10$) then the first band is almost flat and the 
semiclassical function $u_0^{sc}$ well approximates the Wannier function $u_0^W$, as we expect to observe in the 
semiclassical limit $\Lambda_0 \to \infty$ (see, e.g., \cite {WK}).

\begin {remark} \label {Remark7}
By the scaling
\be
x \to  2 k_L x\, , \ t \to  E_R t/\hbar \, , \psi (x) \to \frac {1}{\sqrt {2 k_L}} \psi \left ( \frac {x}{2 k_L} \right ) 
\ee
and setting
\be
F = \frac {mg}{2E_R k_L} \, , \ \zeta = \frac {\gamma}{2E_R k_L} \, , \ \epsilon = \frac {1}{\sqrt {\Lambda_0}} 
\ee
then (\ref {Eq2}) takes the form
\bee
i \partial_t \psi = - \partial^2_{xx} \psi + \frac {1}{\epsilon^2 } \sin^2 (x/2) + F x \psi + \zeta |\psi |^2 \psi \, , \ \| \psi \|_{L^2} =1 \, . \label {Eq20}
\eee
Equation (\ref {Eq20}) is equivalent, up to a change of scale of the time, to the equation
\be
i \epsilon \partial_t \psi = - \epsilon^2 \partial^2_{xx} \psi + \sin^2 \left ( \frac x2 \right ) \psi + f x \psi + \gamma |\psi |^2 \psi 
\ee
where we set
\be
t \to t/\sqrt {\epsilon} \, , \ f = F \epsilon^2 \, , \ \gamma = \epsilon^2 \zeta \, . 
\ee
and where $\epsilon = \Lambda_0^{-1/2}$ plays the role of the semiclassical parameter.
\end {remark}

We compute now the band functions and the Wannier functions for different values of $\Lambda_0$. \ The semiclassical wavefunction $u_0^{sc}$ is computed by (\ref {Eq19}), 
while the Wannier function $u_0^W (x)$ may be computed by means of the Mathieu functions (see Appendix).

\subsubsection {Model $\Lambda_0 =3$}

The first bands of the Bloch operator $H_B = - \frac {\hbar^2}{2m} \partial^2_{xx} + \Lambda_0 E_R \sin^2 (k_L x) $ have endpoints 

\begin {itemize}
 \item [n=1)] $ {E}_1^b = 1.43 \cdot E_R$ and $ {E}_1^t = 2.11 \cdot E_R$;
 
 \item [n=2)] $ {E}_2^b = 2.86 \cdot E_R$ and $ {E}_2^t = 5.49 \cdot E_R$;
 
 \item [n=3)] $ {E}_3^b = 5.56 \cdot E_R$ and $ {E}_3^t = 10.51 \cdot E_R$.
 
 \end {itemize}
 
 Hence, the values of the width of the first two bands are given by 
 \be
 B_1 := {E}_1^t -  {E}_1^b = 0.68 \cdot E_R \ \ \mbox { and } \ \  B_2 := {E}_2^b -  {E}_1^t = 2.63 \cdot E_R \, . 
 \ee
 Furthermore it follows that the first gap has amplitude $g_1 = E_2^b - E_1^t =0.77 \cdot E_R$ of the order of the first band amplitude, while the width of 
 the other gaps are very small (see Fig. \ref {Figura2}, left hand side panel). \ If we compare the first Wannier 
 function $u_0^{W} (x)$ and the semiclassical approximation $u_0^{sc} (x)$ it turns out that (see also Fig. \ref {Figura2}, right hand side panel)
 \be
 \| u_0^W  - u_0^{sc}  \|_{L^2}^2 = 0.091 
 \ee
\begin{figure} [h]
\begin{center}
\includegraphics[height=6cm,width=6cm]{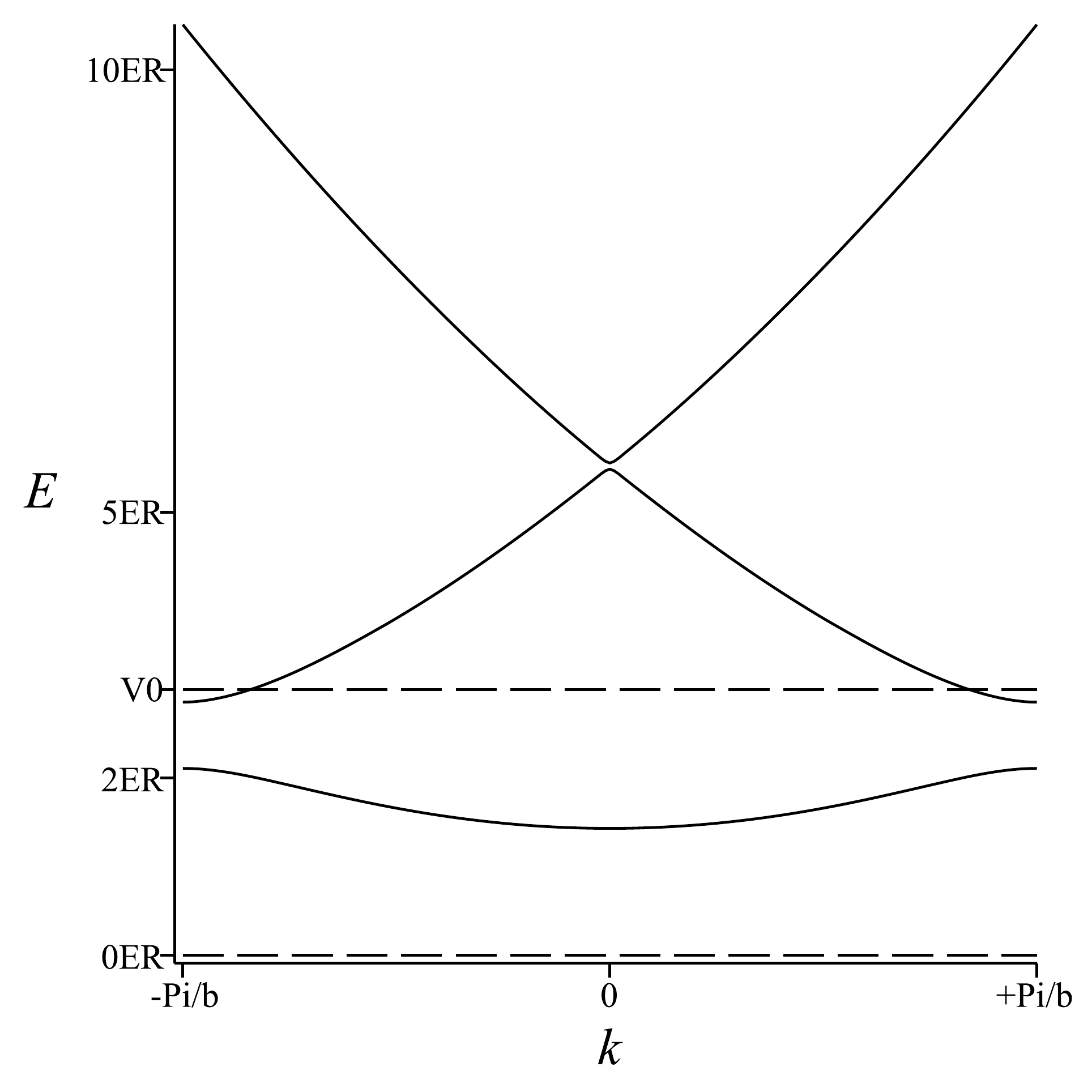}
\includegraphics[height=6cm,width=6cm]{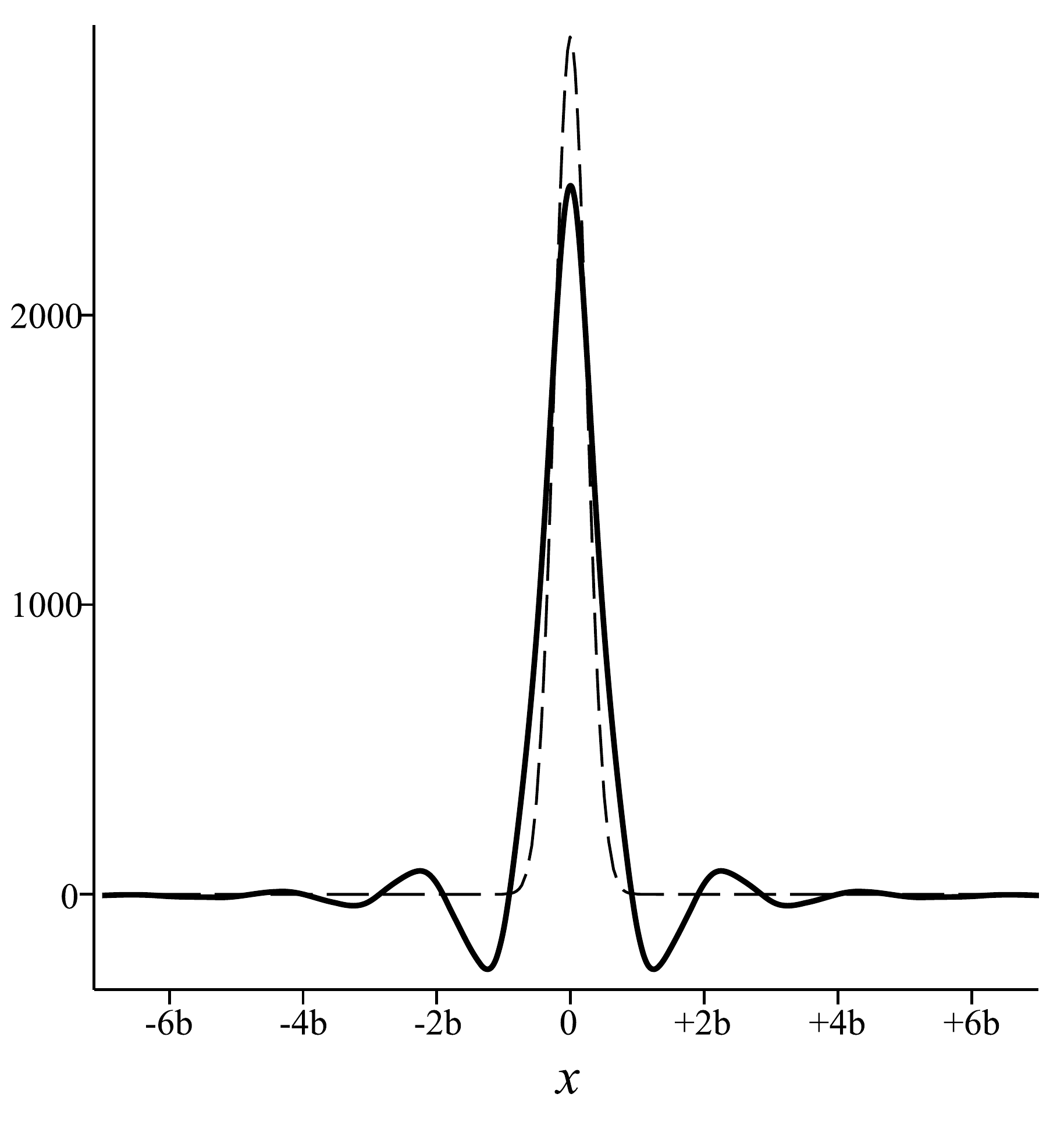}
\caption{\label {Figura2} Here we plot in the left hand side panel the first three band functions $E_n (k)$, $n=1,2,3$ and $k\in \left [ - \frac {\pi}{b} , + \frac {\pi}{b} 
\right ]$, for the Bloch operator $H_B$ where $\Lambda_0 = 3$. \ It turns out that the width of the first gap is of the same order of the width of the first band. \ In the 
right hand side panel we plot the graph of the functions $u_0^{sc}$ (broken line) and $u_0^W $ (full line).}
\end{center}
\end{figure}

\subsubsection {Model $\Lambda_0 =10$}
 
The first bands of the Bloch operator $H_B$ have endpoints 

\begin {itemize}
 \item [n=1)] $ {E}_1^b = 4.32 \cdot E_R$ and $ {E}_1^t = 4.58 \cdot E_R$;
 
 \item [n=2)] $ {E}_2^b = 7.02 \cdot E_R$ and $ {E}_2^t = 8.87 \cdot E_R$;
 
 \item [n=3)] $ {E}_3^b = 9.54 \cdot E_R$ and $ {E}_3^t = 14.07 \cdot E_R$.
 
 \end {itemize}
                                    
 Hence, the values of the width of the first two bands are given by 
 \be
 B_1 := {E}_1^t -  {E}_1^b = 0.26 \cdot E_R \ \ \mbox { and } \ \  B_2 := {E}_2^b -  {E}_1^t = 1.85 \cdot E_R \, . 
 \ee
 Furthermore it also follows that the first gap has amplitude $g_1 = E^b_2 - E_1^t = 2.44 \cdot E_R$ is much larger than the 
amplitude of the first band and that the width of the other gaps are very small (see Fig. \ref {Figura3}, left hand side panel). \ If 
 we compare the first Wannier  function $u^W_0 (x)$ and the semiclassical approximation $u_0^{sc} (x)$ it turns out that 
 (see also Fig. \ref {Figura3}, right hand side panel)
 \be
 \| u_0^W  - u_0^{sc}  \|_{L^2}^2 = 0.055 \, . 
 \ee
Hence, we may conclude that for $\Lambda_0 =10$ the $N$-mode approximation properly works.
\begin{figure} [h]
\begin{center}
\includegraphics[height=6cm,width=6cm]{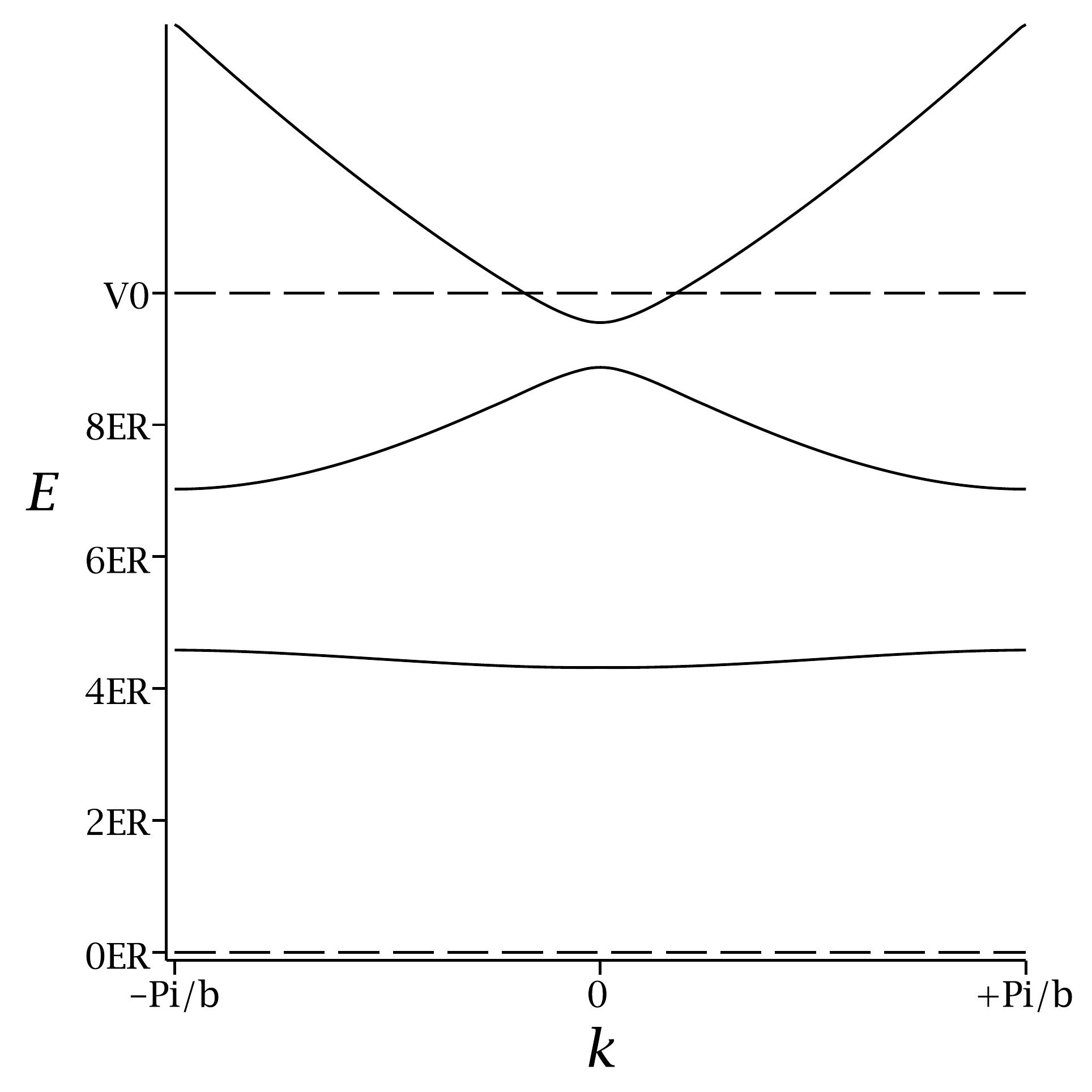}
\includegraphics[height=6cm,width=6cm]{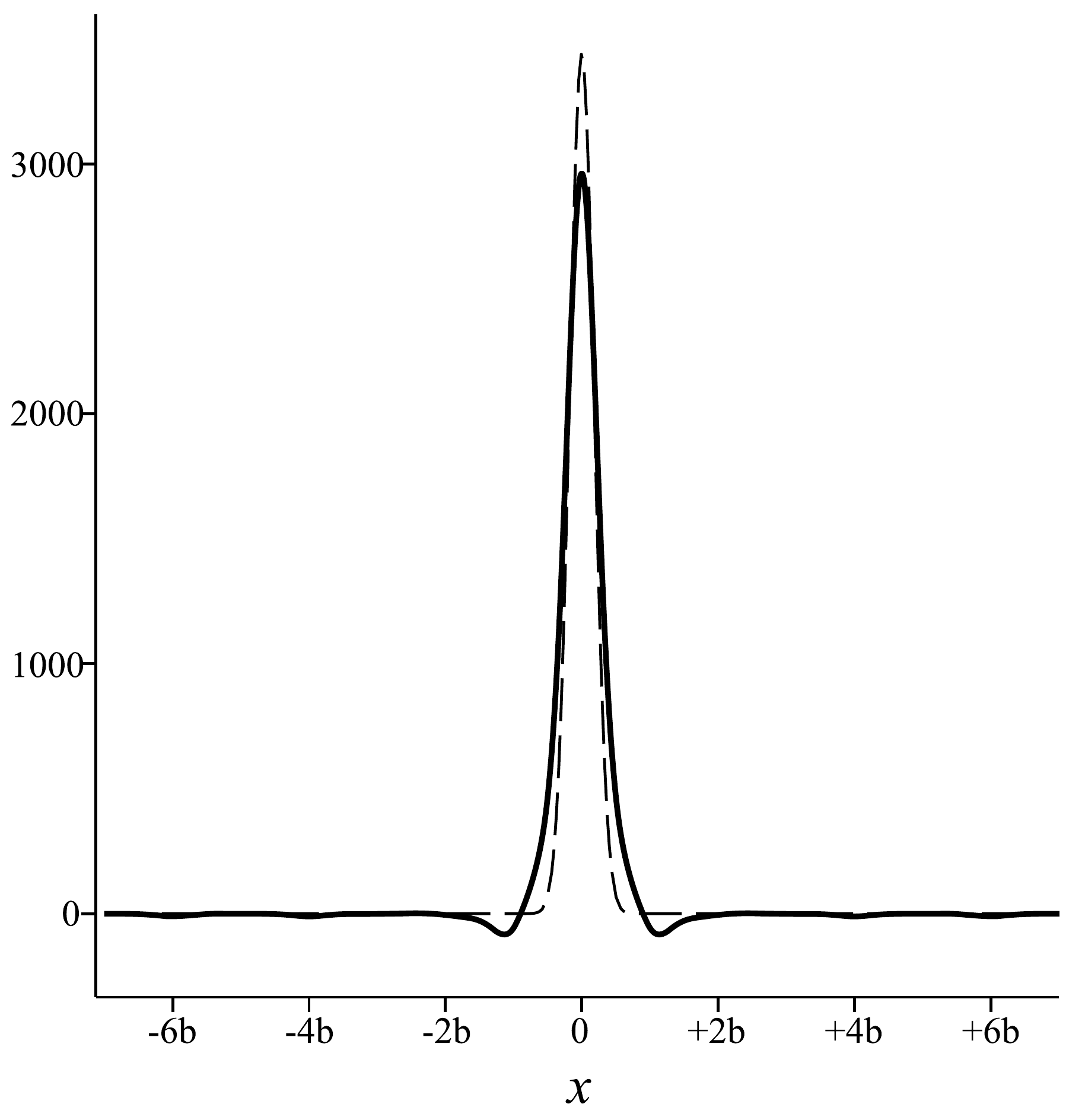}
\caption{\label {Figura3} Here we plot in the left hand side panel the first three band functions $E_n (k)$, $n=1,2,3$ and $k\in \left [ - \frac {\pi}{b} , + 
\frac {\pi}{b} \right ]$, for the Bloch operator $H_B$ where $\Lambda_0 = 10$. \ It turns out that the first band is almost flat, in fact its width is $1/10$-th 
of the width of the first gap. \ In the right hand side panel we plot the graph of the functions $u_0^{sc}$ (broken line) 
and $u_0^W $ (full line).}
\end{center}
\end{figure}

\subsection {Numerical analysis of the model for $\Lambda_0 =10$}

We have seen that for $\Lambda_0 \ge 10$ the $N$-mode approximation is justified. \ For $\Lambda_0 =10$ we have that 
\be
\beta \sim \frac 14 B_1 = 0.065 \cdot E_R \, . 
\ee
Equation (\ref {Eq16}) takes the form
\be
i d_\ell' =  - \sum_{m=1}^N {\mathcal T}_{\ell , m} d_m + \eta |d_\ell |^2 d_\ell +  \ell \delta d_\ell \, , \ \ell \in \N_N \, , 
\ee
where we set
\be
\eta = \frac {\gamma \| u_0^{sc} \|_{L^4}^4}{\beta } \approx \frac {4 {\mathcal N} \pi a_s \hbar^2}{m} \frac {{\pi} \Lambda_0^{1/4} }{b} \frac {1}{2\pi d_\perp^2} 
\frac {1}{\beta} = -0.151 \cdot 10^{-1} \div 0.197
\ee
and 
\be 
\delta = \frac {f b}{\beta} = \frac {m g b}{\beta} = 1.103 \, .
\ee
The Bloch period is given by
\be
T = \frac {2\pi \hbar}{m g b } = 1.740 \, ms \, .
\ee
Hence, the parameters $f$, $\gamma$ and $\beta$ are in a suitable range as discussed in Remark \ref {Remark3}. \ Furthermore, the motion of the Bloch oscillator occurs in 
an interval with width
\bee
\frac {B_1}{|f|} = \frac {0.26 \cdot E_R}{m g } = 9.65 \cdot 10^{-7} m \approx 3.6 \cdot b \, . \label {BlochRange}
\eee
Hence, the $N$-mode approximation with $N=40$ properly works. 

We consider three different situations. \ In the first one we assume that the state is initially prepared on a single lattice site, 
that is $\psi_0$ is a Wannier type function. \ In the other two cases we assume that the initial wavefunction $\psi_0$ is a symmetric or asymmetrical wavefunction 
initially prepared on different lattice sites.

\subsubsection {$\psi_0$ is initially prepared on a single lattice cell} We consider a numerical experiment where $\psi_0 (x) = u_0 (x)$, that is 
$c_\ell (0) =0$, for $\ell \not= N/2$, and $c_{N/2}(0) =1$ (where $N=40$). \ In fact, in such a case we observe a \emph {breathing motion} for the wavefunction; that is, 
the wavefunction, initially prepared in a Wannier state localized on a single site of the optical lattice, symmetrically spreads in space and it 
periodically returns to its initial shape (Fig. \ref {Figura4}, top panel, obtained for $\eta =0.2$). \ Then the expected value of the center of mass  
\be
\langle x \rangle^t = \langle \psi (\cdot ,t) ,x \psi (\cdot , t) \rangle 
\ee
is practical constant $\langle x \rangle^t \approx 0$ up to small fluctuations.
\begin{figure} [h]
\begin{center}
\includegraphics[height=6cm,width=12cm]{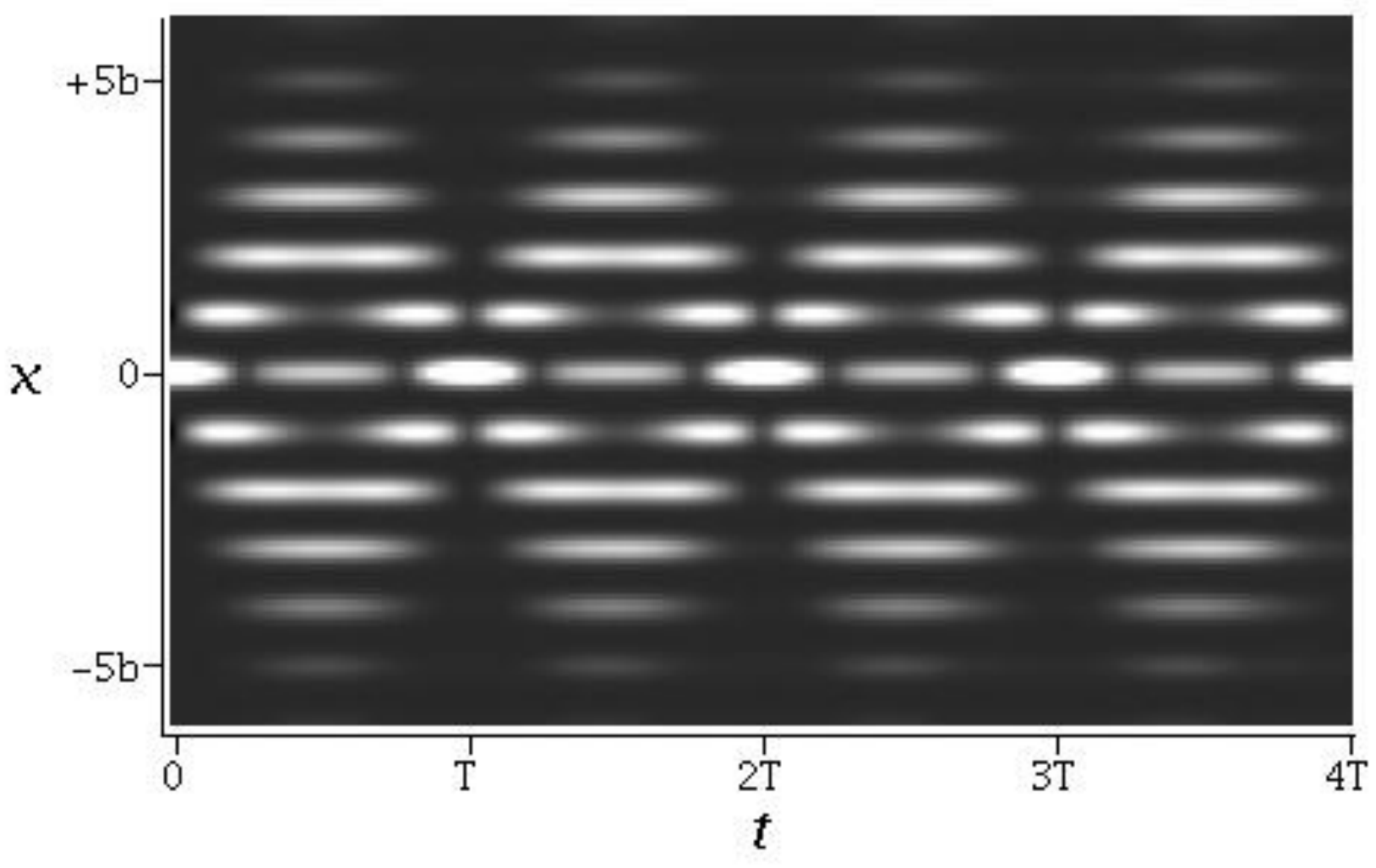}\\
\includegraphics[height=6cm,width=12cm]{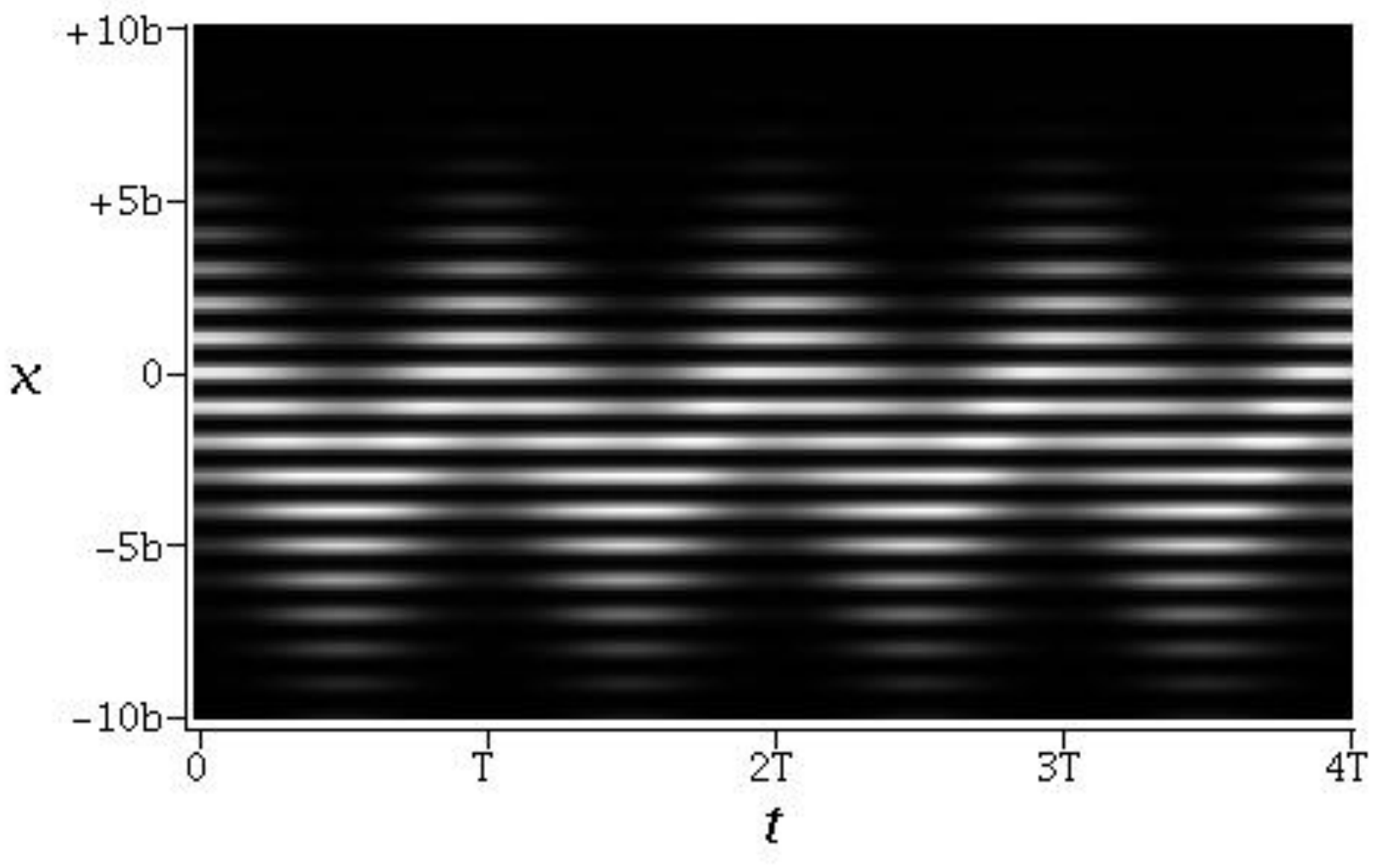}
\caption{\label {Figura4} In the top panel we plot the absolute value of the wavefunction $\psi (x,t)$ initially prepared on a single Wannier state for $\eta =0.2$, 
it turns out that it 
symmetrically spreads in space and periodically returns to its initial shape without motion of the center of mass. \ In the bottom panel we plot the absolute value of 
the wavefunction initially prepared on several lattice sites for $\eta =0.2$; it turns out that the center of mass oscillates with no marked changes of the shape of 
the wavefunction. \ Here $T$ denotes the Bloch period and $b$ is the distance between two adjacent wells. \ Dark regions mean that $|\psi (x,t)|$ is practically zero 
there, white regions mean that $|\psi (x,t)|$ has its maximum value there.}
\end{center}
\end{figure}

\subsubsection {$\psi_0$ is a symmetric wavefunction initially prepared on different lattice cells} We consider a numerical experiment where $N=40$ and 
$\psi_0 (x) = \sum_{\ell=0}^{40} c_\ell u_\ell (x)$, where $c_\ell$ have a symmetric Gaussian-type distribution around $\ell = N/2$. \ That is the initial value of 
the coefficients $c_\ell (t)$ are given in Table \ref {tabella1}, the initial wavefunction $\psi_0$ is plotted in Fig. \ref {Figura5}, left hand side panel. \ In such a case the 
center of mass $\langle x \rangle^t$  oscillates in space and the wavefunction moves with no marked changes in shape (see Fig. \ref {Figura4}, bottom panel). \ In particular, the 
function $\langle x \rangle^t$ exhibits, for $\eta \not= 0$, an oscillating motion where the wavefunction amplitude is modulated (see Fig. \ref {Figura6}) and where the 
oscillating (pseudo-)period (that is the time interval between two consecutive minima or maxima points) depends on $\eta$. \ In Fig. \ref {Figura7} we plot 
the mean value of the oscillating period of the motion of the center of mass after 14 oscillations for $\eta$ in the range $[-0.1,+0.2]$; it turns out that the relative uncertainty 
with respect to the Bloch period is of order $2.4 \cdot 10^{-5}$.
\begin{table}
\begin{center}
\begin{tabular}{|c|c|c|c|} 
\hline
$c_0 =0$                     &$c_{10} =0.396 \cdot 10^{-3} $  &$c_{21}=0.429$                &$c_{31}=0.898 \cdot 10^{-4}$ \\ \hline
$c_1 =0$                     &$c_{11} =0.151 \cdot 10^{-2} $  &$c_{22}=0.347$                &$c_{32}=0.177 \cdot 10^{-4}$ \\ \hline
$c_2 =0$                     &$c_{12} =0.502 \cdot 10^{-2} $  &$c_{23}=0.244$                &$c_{33}=0.303 \cdot 10^{-5}$ \\ \hline
$c_3 =0$                     &$c_{13} =0.149 \cdot 10^{-1} $  &$c_{24}=0.149$                &$c_{34}=0$ \\ \hline
$c_4 =0$                     &$c_{14} =0.363 \cdot 10^{-1} $  &$c_{25}=0.788 \cdot 10^{-1}$  &$c_{35}=0$ \\ \hline
$c_5 =0$                     &$c_{15}= 0.788 \cdot 10^{-1} $  &$c_{26}=0.363 \cdot 10^{-1}$  &$c_{36}=0$ \\ \hline
$c_6 =0$                     &$c_{16}= 0.149               $  &$c_{27}=0.149 \cdot 10^{-1}$  &$c_{37}=0$ \\ \hline
$c_7 =0.303 \cdot 10^{-5} $  &$c_{17}= 0.244               $  &$c_{28}=0.502 \cdot 10^{-2}$  &$c_{38}=0$ \\ \hline
$c_8 =0.177 \cdot 10^{-4} $  &$c_{18}= 0.347               $  &$c_{29}=0.151 \cdot 10^{-2}$  &$c_{39}=0$ \\ \hline
$c_9 =0.898 \cdot 10^{-4} $  &$c_{19}= 0.429               $  &$c_{30}=0.396 \cdot 10^{-3}$  &$c_{40}=0$ \\ \hline 
  &$c_{20}=0.460$  &  & \\ \hline
\end{tabular}
\caption{\small Initial values of the coefficients $c_\ell :=c_\ell (0)$ of the wavefunction. \ The initial wavefunction $\psi_0$ has a symmetric shape 
and its width is of order of several lattice periods.}
\label{tabella1}
\end{center}
\end {table}

\begin{figure} [h]
\begin{center}
\includegraphics[height=6cm,width=6cm]{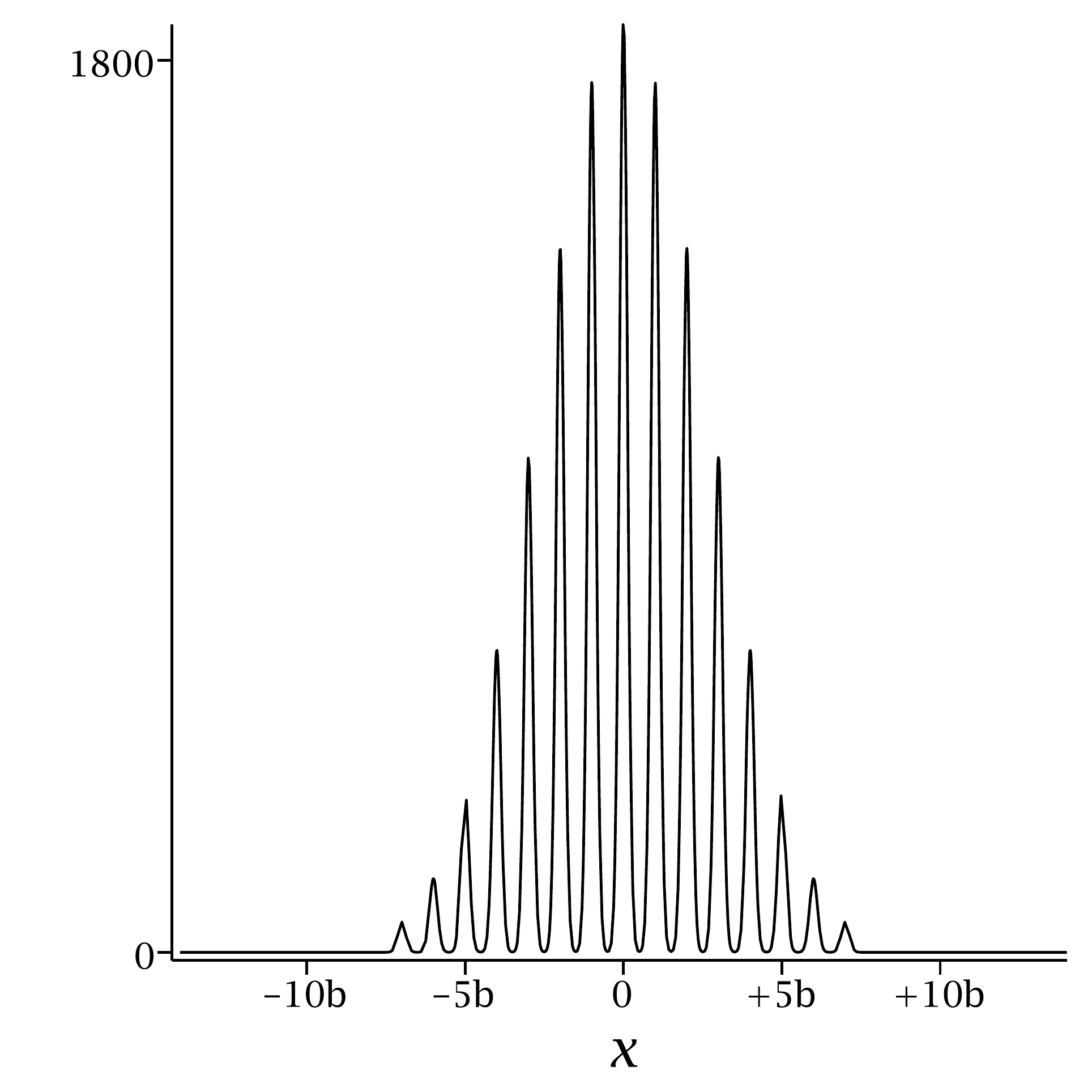}
\includegraphics[height=6cm,width=6cm]{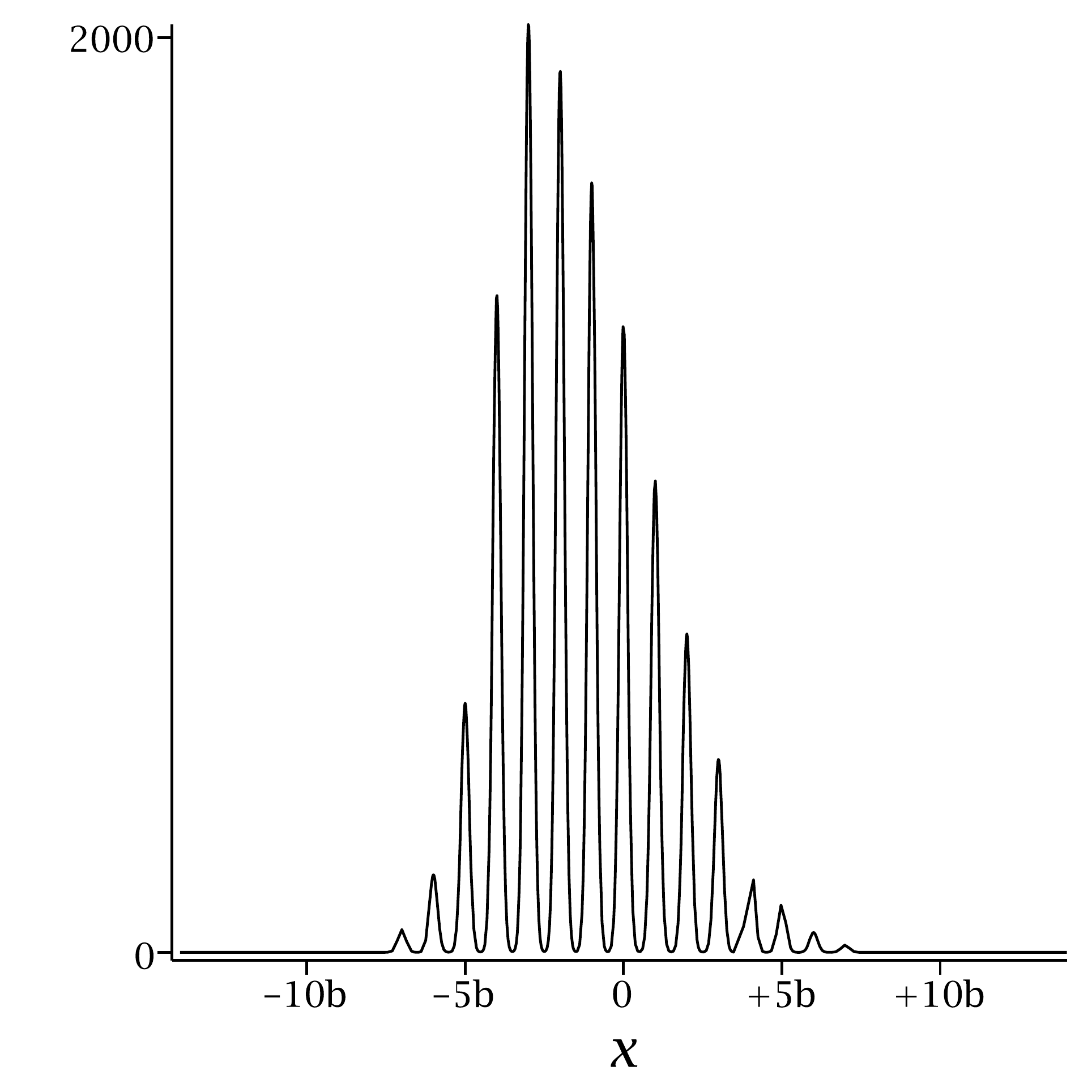}
\caption{\label {Figura5} Here we plot the absolute value of the initial wavefunction $\psi_0$ prepared on several lattice sites; the left hand side panel corresponds to 
the symmetric initial wavefunction, the right hand side panel corresponds to the asymmetrical one.}
\end{center}
\end{figure}
\begin{figure} [h]
\begin{center}
\includegraphics[height=6cm,width=12cm]{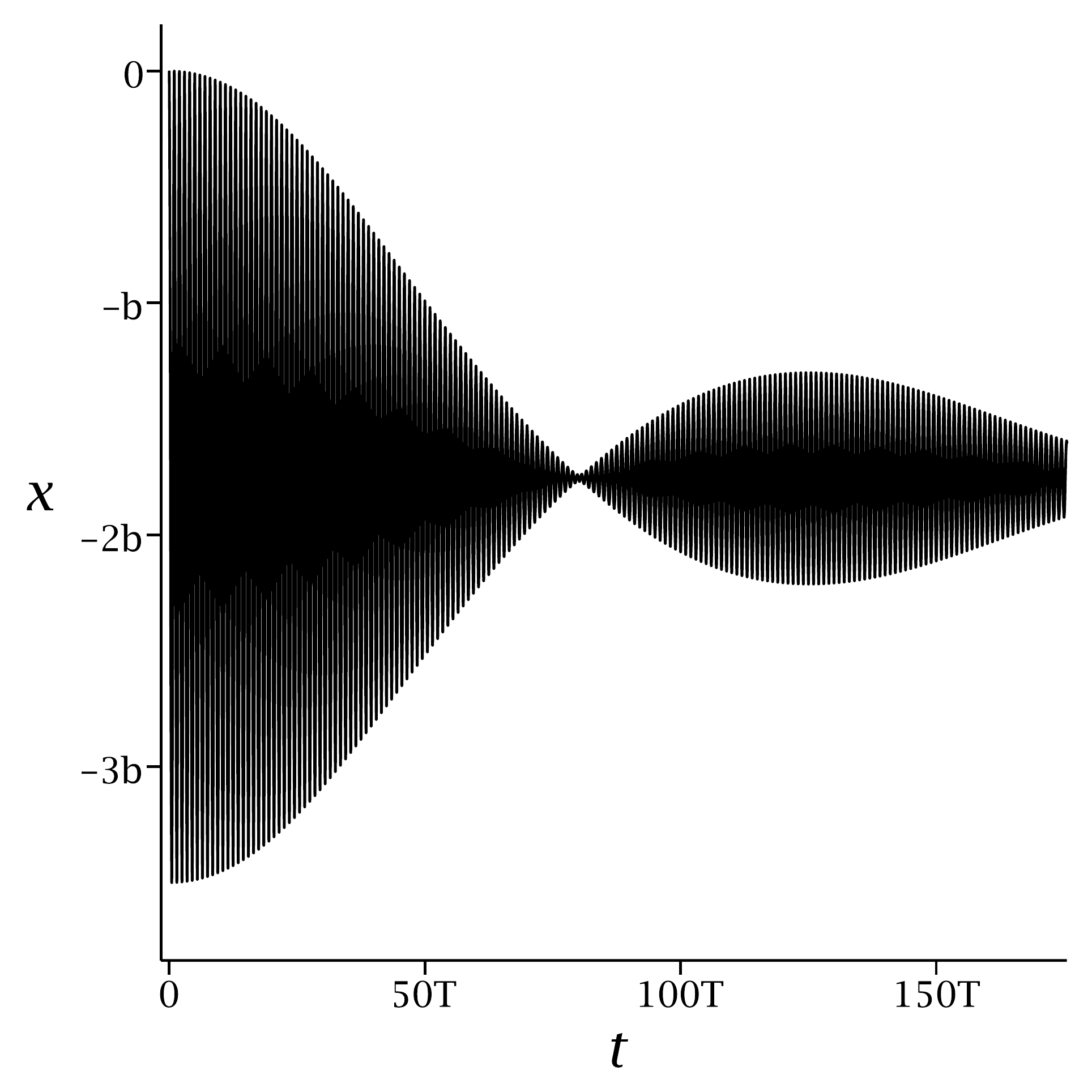}\\
\includegraphics[height=6cm,width=12cm]{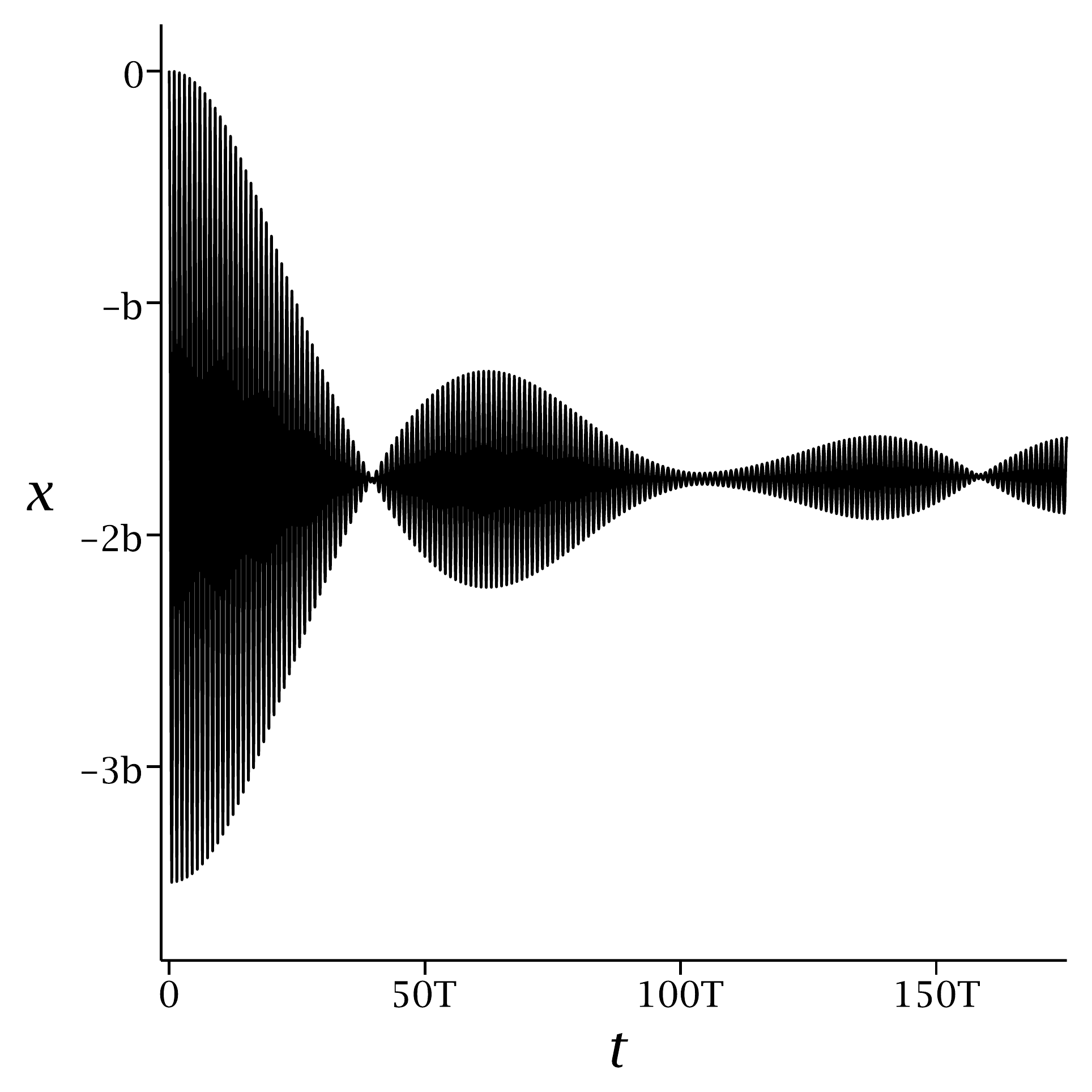}
\caption {\label {Figura6} Here we plot the motion of the center of mass of the wavefunction initially prepared on several lattice sites. \ The initial wavefunction 
is a symmetric function. \ Top panel corresponds to the case of $\eta =0.1$, bottom panel corresponds to the case of $\eta =0.2$. \ The center of mass rapidly oscillates 
with modulation of the amplitude. \ The width of the escillations is in a range lesser or equal to $3b$ is agreement with (\ref {BlochRange}).}
\end{center}
\end{figure}
\begin{figure} [h]
\begin{center}
\includegraphics[height=6cm,width=6cm]{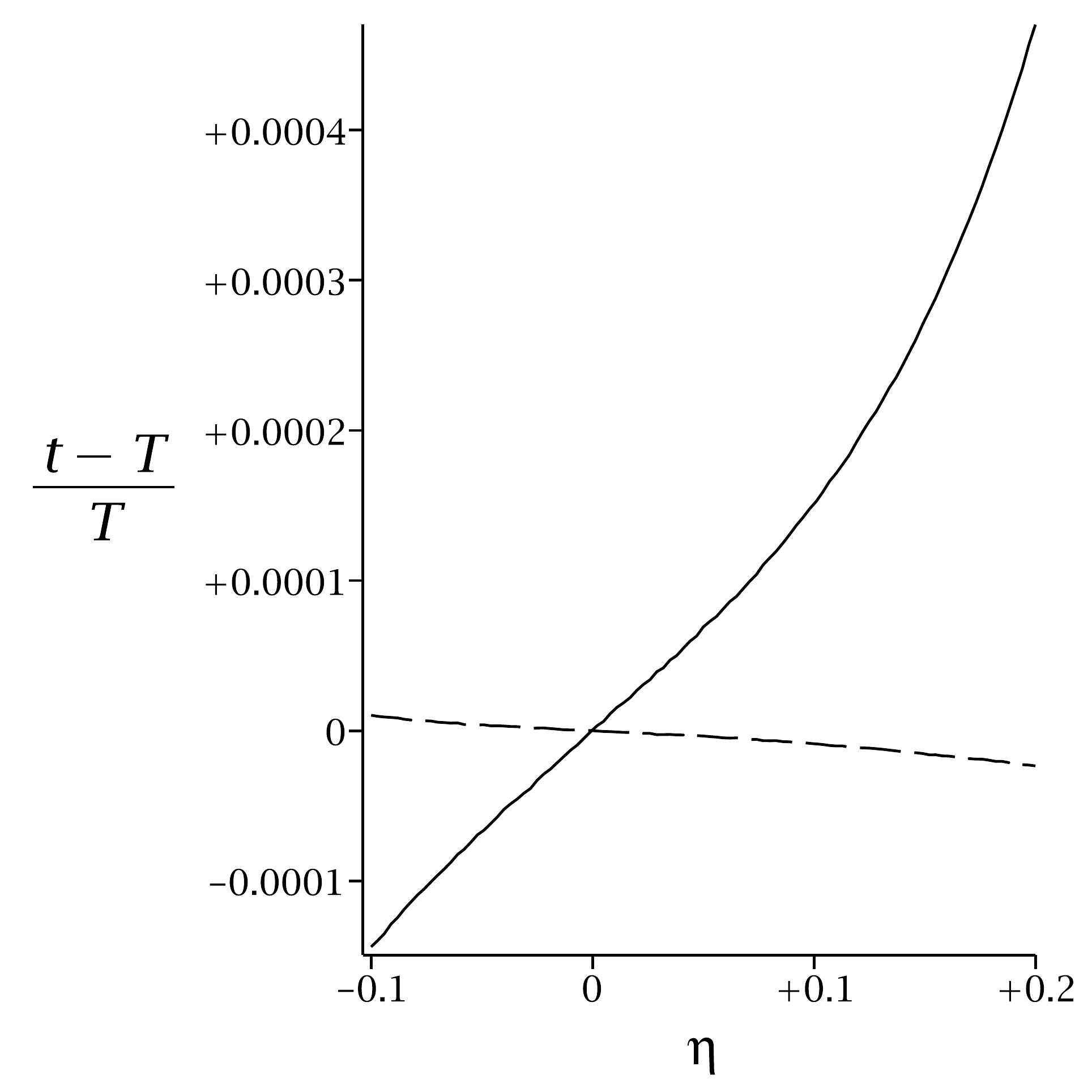}
\caption{\label {Figura7} Here we plot the mean value of the pseudo-period of the oscillating motion of the center of mass after 14 oscillations, as function of the effective nonlinearity parameter $\eta$. \ Broken line corresponds to the case of a 
symmetric wavefunction prepared on several lattice sites; it turns out that in such a case the oscillating period is almost constant. \ Full line  
corresponds to the case of an asymmetrical wavefunction prepared on several lattice sites; it turns out that it actually depends on $\eta$. \ Here 
$T$ denotes the Bloch period, while $t$ denotes the oscillating period.}
\end{center}
\end{figure}
\subsubsection {$\psi_0$ is an asymmetrical wavefunction initially prepared on different lattice cells} We consider a numerical experiment where $N=40$ and 
$\psi_0 (x) = \sum_{\ell=0}^{40} c_\ell u_\ell (x)$, where $c_\ell$ have an asymmetrical Gaussian-type distribution. \ That is the initial value of 
the coefficients $c_\ell (t)$ are given in Table \ref {tabella2}, the initial wavefunction is plotted in Fig. \ref {Figura5}, right hand side panel. \ As in the 
symmetric case 
the center of mass $\langle x \rangle^t$ oscillates in space and the wavefunction moves with no marked changes in shape. \ Even in such a case the 
function $\langle x \rangle^t$ exhibits, for $\eta \not= 0$, an oscillating motion where the wavefunction amplitude is modulated. \ In contrast with the symmetric case 
the oscillating (pseudo-)period (that is the time interval between two consecutive minima or maxima points) actually depends on $\eta$; 
in Fig. \ref {Figura7} we plot the mean value of the oscillating period of the center of mass after 14 oscillations for $\eta$ in the range $[-0.1,+0.2]$ 
and it is not almost constant like in the previous case, in particular it turns out that the relative uncertainty 
with respect to the Bloch period is of order $4.6 \cdot 10^{-4}$, which is 20 times the relative uncertainty observed in the symmetrical case.
\begin{table}
\begin{center}
\begin{tabular}{|c|c|c|c|} 
\hline
$c_0 =0$                     &$c_{10} =0.180 \cdot 10^{-3} $  &$c_{21}=0.252$                &$c_{31}=0.175 \cdot 10^{-4}$ \\ \hline
$c_1 =0$                     &$c_{11} =0.814 \cdot 10^{-3} $  &$c_{22}=0.170$                &$c_{32}=0.323 \cdot 10^{-5}$ \\ \hline
$c_2 =0$                     &$c_{12} =0.330 \cdot 10^{-2} $  &$c_{23}=0.103$                &$c_{33}=0$ \\ \hline
$c_3 =0$                     &$c_{13} =0.121 \cdot 10^{-1} $  &$c_{24}=0.546 \cdot 10^{-1}$  &$c_{34}=0$ \\ \hline
$c_4 =0$                     &$c_{14} =0.414 \cdot 10^{-1} $  &$c_{25}=0.257 \cdot 10^{-1}$  &$c_{35}=0$ \\ \hline
$c_5 =0$                     &$c_{15}= 0.133               $  &$c_{26}=0.106 \cdot 10^{-1}$  &$c_{36}=0$ \\ \hline
$c_6 =0$                     &$c_{16}= 0.351               $  &$c_{27}=0.386 \cdot 10^{-2}$  &$c_{37}=0$ \\ \hline
$c_7 =0$                     &$c_{17}= 0.496               $  &$c_{28}=0.123 \cdot 10^{-2}$  &$c_{38}=0$ \\ \hline
$c_8 =0.614 \cdot 10^{-5} $  &$c_{18}= 0.471               $  &$c_{29}=0.340 \cdot 10^{-3}$  &$c_{39}=0$ \\ \hline
$c_9 =0.354 \cdot 10^{-4} $  &$c_{19}= 0.411               $  &$c_{30}=0.826 \cdot 10^{-4}$  &$c_{40}=0$ \\ \hline 
  &$c_{20}=0.336$  &  & \\ \hline  
\end{tabular}
\caption{\small Initial values of the coefficients $c_\ell :=c_\ell (0)$ of the wavefunction. \ The initial wavefunction $\psi_0$ has an asymmetrical shape 
and its width is of order of several lattice periods.}
\label{tabella2}
\end{center}
\end {table}
\begin{figure} [h]
\begin{center}
\includegraphics[height=6cm,width=12cm]{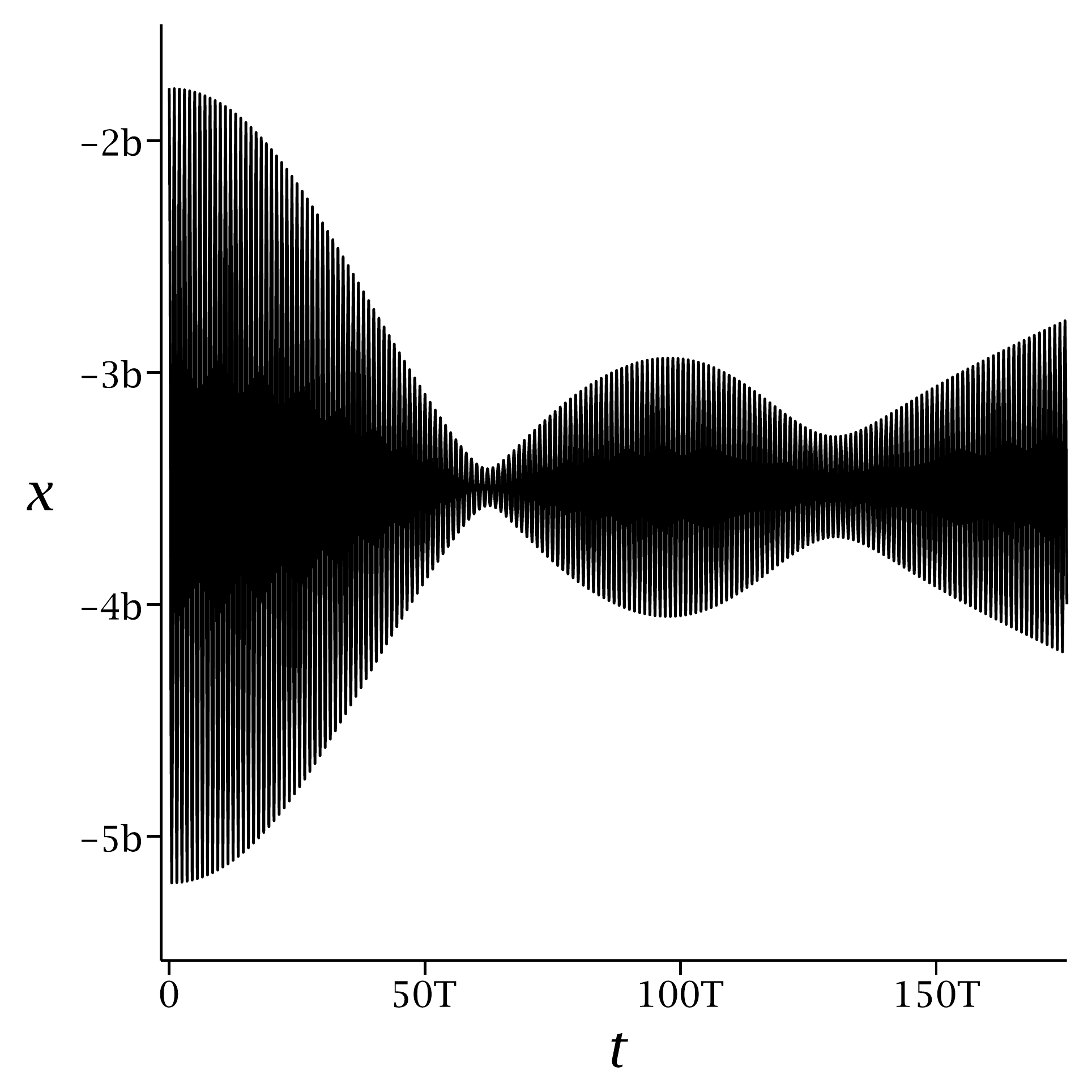} \\ 
\includegraphics[height=6cm,width=12cm]{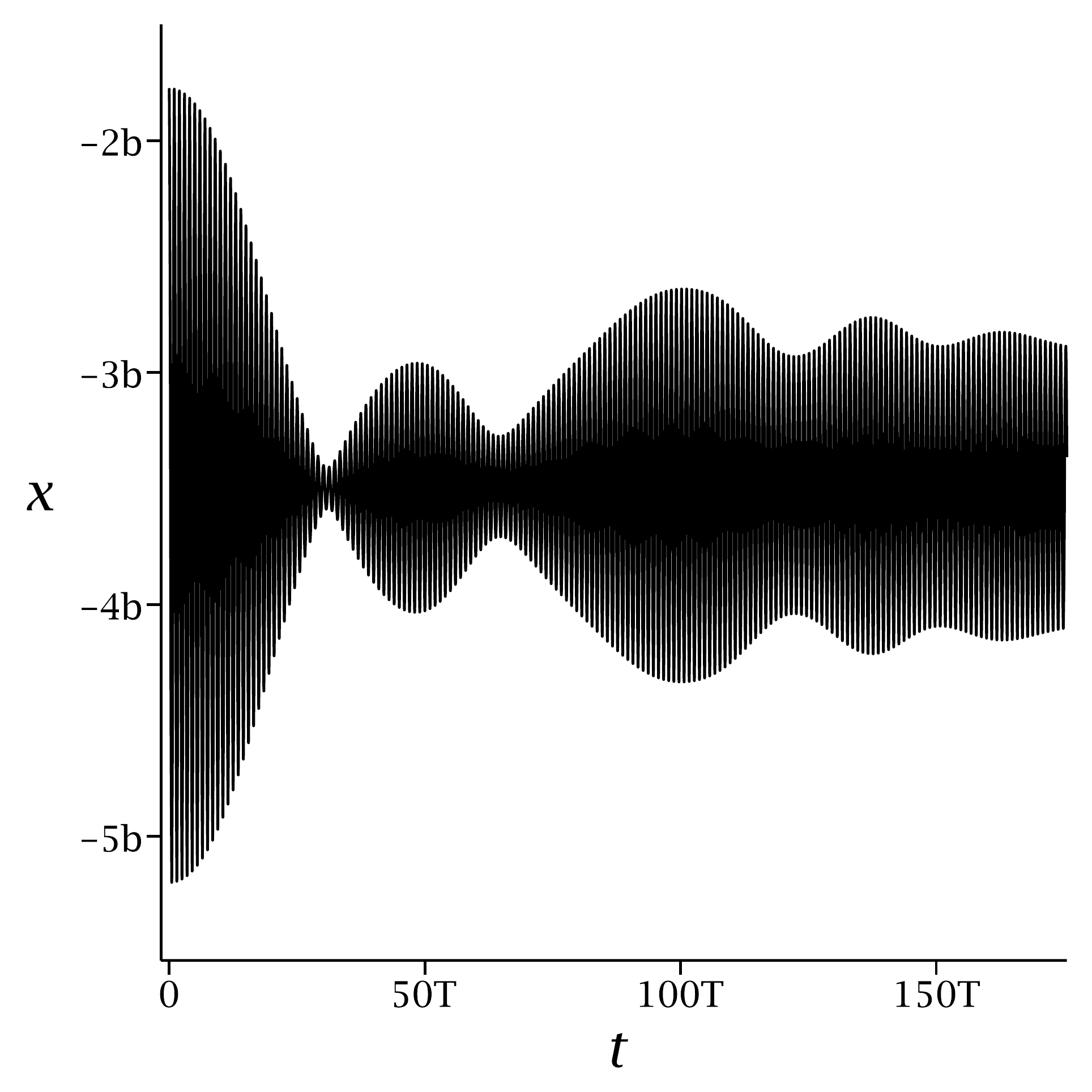}
\caption{\label {Figura8} Here we plot the motion of the center of mass of the wavefunction initially prepared on several lattice sites. \ The initial wavefunction 
is a symmetric function. \ Top 
panel corresponds to the case of $\eta =0.1$, bottom panel corresponds to the case of $\eta =0.2$. \ The center of mass rapidly oscillates 
with modulation of the amplitude. \ The width of oscillation is in a range lesser or equal to $3b$ is agreement with (\ref {BlochRange}).}
\end{center}
\end{figure}

\section {Conclusion} In this paper we have proved that in the semiclassical limit the $N$-mode approximation (\ref {Eq16}), corresponding to a discrete 
nonlinear Schr\"odinger equation with a finite number of modes, gives the solution of the Gross-Pitaevskii equation (\ref {Eq3}) for a BEC in a 
multiple-well lattice in a Stark-type external field. \ Furthermore, we have numerically solved the $N$-mode approximation considering a real model, where 
for some values of the physical parameters the validity of the $N$-mode approximation (\ref {Eq16}) seems to be justified. \ In particular, we have seen that a 
state initially prepared on several wells have an oscillating behavior with modulated amplitude, the oscillating (pseudo-)period is computed for different values 
of the nonlinear strength and it turns out that such a period is practically constant when the initial state is a symmetric one; on the other side, such a period 
actually depends on the nonlinear strength when the initial state is an asymmetrical one. \ This observation 
opens a question about the validity of the method proposed by Clad\'e {\it et al} \cite {CGSNJB} for the deterimantion of the gravitational constant $g$ by means of 
the measurement of the oscillating period \cite {FPST,PWTAPT}, where it has been assumed that the oscillating period coincides with the Bloch period $T$ independently from 
the shape of the initial wavefunction and of the value of the nonlinearity strength parameter. 

\appendix

\section {Band Functions and Wannier functions}

For a generic one-dimensional Bloch operator $H_B$ the spectrum is given by a sequence of infinitely many closed intervals named bands. \ These intervals are the image 
of functions named band functions. \ The band functions of $H_B$ are denoted by ${E}_n (k)$, where the quasimomentum $k$ runs in the Brillouin zone 
$\left [ - \frac {\pi}{b}, + \frac {\pi}{b} \right ]$. \ The spectrum of the Bloch operator $H_B$ is given by the bands $\sigma (H_B) = 
\cup_{n=1}^{\infty} [ {E}_n^b , {E}_n^t ]$ where 
\be
{E}_n^b = 
\left \{
\begin {array}{ll}
 {E}_n (0) & \ \mbox { if } n \mbox { is even} \\ 
 {E}_n (\pi/b) & \ \mbox { if } n \mbox { is odd}
\end {array}
\right.
\ , \ 
{E}_n^t = 
\left \{
\begin {array}{ll}
 {E}_n (\pi /b) & \ \mbox { if } n \mbox { is even} \\ 
 {E}_n (0) & \ \mbox { if } n \mbox { is odd}
\end {array}
\right. \, .
\ee

In the case of potential (\ref {Eq1}) the band functions may be explicitly computed. \ In particular let us look for the Bloch functions of the equation
\be
H_B \psi = {E} \psi \, ,  \ \ H_B =  - \frac {\hbar^2}{2m}\frac {d^2}{dx^2} + V_0 \sin^2 (k_L x) \, .
\ee
If we set 
\be
\E = \left ( {E} - \frac 12 V_0 \right ) \frac {2m}{\hbar^2}\, , \ q = 2 k_L \, , {\tilde V}_0= \frac {V_0 m}{\hbar^2} = 
\frac {m \Lambda_0 E_r }{\hbar^2} = \frac 12 \Lambda_0 k_L^2   
\ee
and recalling that $\sin^2 (\theta ) = \frac 12 \left [ 1- \cos (2\theta )\right ]$ then the Mathieu equation takes the form
\bee
\left [ \tilde H_B - \E \right ] \psi =0 \ \mbox { where } \ \tilde H_B = - \frac {d^2}{dx^2} - {\tilde V}_0 \cos ( q x) \, . \label {Eq24}
\eee
It has a fundamental set of solutions \cite {AS}
\be
\psi_1 (x,\E) = C \left [ \frac {4\E}{q^2} , - \frac {{\tilde V}_0}{q^2} , \frac 12 q x \right ] \ \mbox { and } \  
\psi_2 (x,\E) = \frac {2}{q}S \left [ \frac {4\E}{q^2} , - \frac {{\tilde V}_0}{q^2} , \frac 12 q x \right ] 
\ee
where $S$ and $C$ denotes the two Mathieu's functions, satisfying the conditions
\be
%\begin {array}{lcl}
\psi_1 (0,\E) =1 \, , \frac {\partial \psi_1 (0,\E) }{\partial y} =0 \ \mbox { and } \ 
\psi_2 (0,\E) =0 \, , \frac {\partial \psi_2 (0,\E) }{\partial y} =1 \, . 
%\end {array}
\ee

Hence, the band functions ${\mathcal E}_n (k)$ associated to the spectral problem (\ref {Eq24}) are the solutions 
of the equation $\mu (\E) = \cos (k b)$ where 
\be
\mu (\E) = \psi_1 (b,\E) = C \left [ \frac {4\E}{q^2} , - \frac {{\tilde V}_0}{q^2} , \pi \right ] \, . 
\ee

Let $\lambda =e^{i k b}$, then the equation $\mu (\E) = \cos (k b)$ can be written as $\mu (\E) = \frac 12 (\lambda + \lambda^{-1})$. \ We observe that 
for $k \in \left [ 0 , \frac {\pi}{b} \right ]$ then $\sin ( k b ) = \sqrt {1-\mu^2 (\E)}$. \ The Bloch function is given by \cite {K}
\be
\psi (x,\E) = \frac {\chi (x,\E)}{\sqrt {N (\E)}}
\ee
where
\be
\chi (x,\E) &=& \psi_2 (b,\E)\psi_1 (x,\E) + \frac 12 [ \lambda (\E) - \lambda^{-1} (\E) ] \psi_2 (x,\E) \\ 
&=& \psi_2 (b,\E)\psi_1 (x,\E) + i \sqrt {1-\mu(\E)^2} \psi_2 (x,\E)
\ee
and
\be
N (\E)= - \frac {4\pi}{b} \psi_2 (b,\E) \frac {d\mu }{d\E} \, . 
\ee
We recall that the Bloch function $\psi_n (x,k) = \psi (x, \E_n (k))$, where $\E_n$ is the band function associated to $\tilde H_B$, is normalized to one:
\be
\frac {2\pi}{b} \int_0^b |\psi_n (x,k) |^2 d x = 1 
\ee
and furthermore it is such that 
\be
\psi_n (x,-k ) = \overline {\psi_n (x,k)} \, . 
\ee

Finally, the Wannier function on the zero-th cell associated to the $n$-th band is given by 
\be
w_n (x) &=& \left ( \frac {b}{2\pi} \right )^{1/2} \int_{-\pi/b}^{+\pi/b} \psi_n (x,k) dk 
= 2 \left ( \frac {b}{2\pi} \right )^{1/2} \int_0^{+\pi/b} \Re \psi_n (x,k) dk \\ 
&=& 2 \left ( \frac {b}{2\pi} \right )^{1/2} \int_0^{\pi /b} \frac {\psi_2 (b,\E_n (k)) \psi_1 (x,\E_n (k))}{\sqrt {N(\E_n (k))}} dk \\ 
&=& \frac {b}{\sqrt {2}\pi} \int_0^{\pi /b} \frac {\sqrt {\psi_2 (b,\E_n (k)) }\psi_1 (x,\E_n (k))}{\sqrt {- \frac {d\mu (\E_n(k))}{d\E}}} dk \\ 
&=& \frac {1}{\sqrt {2}\pi} \int_{\E_n(0)}^{\E_n(\pi /b)} \frac {\sqrt {\psi_2 (b,\E) }\psi_1 (x,\E)\sqrt {- \frac {d\mu (\E)}{d\E}}}{\sqrt {1-\mu^2 (\E)}}d\E
\ee
since the Mathieu functions are real valued when their arguments are real numbers. \ In particular,
\be
u_0^{W}(x):= w_1 (x) = \frac {1}{\sqrt {2}\pi} \int_{\E_1^b}^{\E_1^t} \frac {\sqrt {\psi_2 (b,\E) }\psi_1 (x,\E)
\sqrt {- \frac {d\mu (\E)}{d\E}}}{\sqrt {1-\mu^2 (\E)}}d\E \, . 
\ee

{\bf Acknowledgements.} {\it This work is partially supported by Gruppo Nazione per la Fisica Matematica (GNFM-INdAM). \ The author is grateful for 
the hospitality of the Isaac Newton Institute for Mathematical Sciences where part of this paper was written.}

\end {document}